\pgfplotsset{filter discard warning=false}
\newlength\figureheight
\newlength\figurewidth
\newlength\rv
\newtheorem{theorem}{Theorem}
\newtheorem{corollary}[theorem]{Corollary}
\begin{document}

\title{A Markov Chain Model for the Decoding Probability of Sparse Network Coding}


\author{Pablo Garrido$^\dag$, Daniel E. Lucani$^\$$, Ram\'on Ag\"uero$^\dag$\\[6pt]
	\begin{minipage}{0.5\textwidth}
		\begin{center}
			$^\dag$ University of Cantabria\\
			Santander, Spain\\
			\texttt{\small \{pgarrido,ramon\}@tlmat.unican.es}
		\end{center}
	\end{minipage}
	\begin{minipage}{0.5\textwidth}
		\begin{center}
			$^\$$ Department of Electronic Systems \\
			 Aalborg University, Denmark\\
			\texttt{\small del@es.aau.dk}
		\end{center}
	\end{minipage}
}

\maketitle           

\begin{abstract}
Random Linear Network Coding (RLNC) has been proved to offer an efficient communication scheme, leveraging an interesting robustness against packet losses. However, it suffers from a high computational complexity and some novel approaches, which follow the same idea, have been recently proposed. One of such solutions is Tunable Sparse Network Coding (TSNC), where only few packets are combined in each transmissions. The amount of data packets to be combined in each transmissions can be set from a density parameter/distribution, which could be eventually adapted. In this work we present an analytical model that captures the performance of SNC on an accurate way. We exploit an absorbing Markov process where the states are defined by the number of useful packets received by the decoder, i.e the decoding matrix rank, and the number of non-zero columns at such matrix. The model is validated by means of a thorough simulation campaign, and the difference between model and simulation is negligible. A mean square error less than $4 \cdot 10^{-4}$ in the worst cases. We also include in the comparison some of more general bounds that have been recently used, showing that their accuracy is rather poor. The proposed model would enable a more precise assessment of the behavior of sparse network coding techniques. The last results show that the proposed analytical model can be exploited by the TSNC techniques in order to select by the encoder the best density as the transmission evolves.

\end{abstract}

\begin{IEEEkeywords}
Random Linear Coding; Sparse Matrix; Network Coding; Absorbing Markov Chain
\end{IEEEkeywords} 
\section{Introduction}
\label{Sec: Introduction}

\ac{NC} techniques foster a new communication paradigm, where packets are no longer immutable entities and nodes across the network could retransmit, discard or recode the packets. Among these techniques, \ac{RLNC} stands as one of the most interesting solutions, since it provides robustness against packet losses. On the other hand, some questions have been raised about the decoding complexity of \ac{RLNC}. 

In order to reduce such complexity, \ac{SNC} techniques came out. In particular, a \ac{TSNC} scheme was proposed by Feizi~\emph{et al.}~\cite{Feizi2012}; in a nutshell they proposed tuning the density of the coded packets, as they are being generated by the source during the transmission. There is a trade-off between the reduction of the computational complexity and the performance degradation induced by the corresponding overhead. Hence, it would be really helpful if we could find the optimum density configuration. However, there was not an appropriate model for sparse coding techniques, and only some theoretical bounds have been used. These bounds are trying to apply in a larger number of cases and have focused on a single dimension: the degrees of freedom, i.e decoding matrix rank, as part of the relevant information. As will be seen later, their accuracy is quite poor. 

In this paper we propose a complete analytical model for \ac{SNC} techniques. We include a second dimension, the covered packets, i.e the non-zero columns at the decoding matrix. It is based on an Absorbing Markov Chain and it precisely mimics the probability of generating new information when sparse coding schemes are used. To our best knowledge, there were not any similar proposal in the related literature. We shall later see that the accuracy of the model is very high.

The rest of the paper is structured as follows: Section~\ref{Sec: Related} summarizes the operation principles of \ac{NC} techniques and recalls some of the bounds that have been used in previous works to estimate the performance of \ac{SNC}. In Section~\ref{Sec: Implementation} we describe the proposed model and we exploit the properties of Absorbing Markov Chains to assess the performance of \ac{SNC} techniques. Finally, in~\ref{Sec: Results} we validate our proposal, by means of an extensive simulation campaign; we also compare the performance with that exhibited by some of the bounds that have been previously used in the related literature. We finally conclude the paper in~\ref{Sec: Conclusions}, highlighting its most relevant contributions and advocating some aspects that will be addressed in our future work, by exploiting the proposed model.

\section{From Random Linear Network Coding to Tunable Sparse Network Coding}
\label{Sec: Related}

\subsection{RLNC}

\ac{NC} techniques were originally proposed by \emph{Ahlswede et al.} in their seminal paper~\cite{Ahlswede2000}, where the \emph{store and forward} approach was questioned; they also proved that the use of a coding scheme yields the maximum multicast capacity. Some subsequent works by Koetter and Li~\cite{Koetter:2003,Li2003} broadened that idea by using linear codes, and Ho~\emph{et al.}~\cite{Ho2003} presented a randomized network coding approach that achieves the maximum multicast capacity with high probability, advocating the \ac{RLNC} scheme. Since those initial works, we have witnessed an increasing interest on potential applications of \ac{NC}.

Many works have studied the benefits of \ac{NC} techniques; Katty~\cite{Katti2008} and Chachulski~\cite{Chachulski2007} were some of the first ones proposing actual protocols, COPE and \ac{MORE}, respectively. Each of them represents a different \ac{NC} flavor: \emph{Inter-flow} and \emph{Intra-flow}. In \emph{Inter-flow} \ac{NC} packets belonging to different information flows are combined. Although this approach has been thoroughly analyzed~\cite{Katti2008, Zhao2010, Zhao2012, traskov2006}, they exhibit some drawbacks if applied over realistic scenarios, as was shown in~\cite{Cloud2011, GomezPIMRC2012}. 

On the other hand, \emph{Intra-flow} \ac{NC} techniques are based on the combination of packets belonging to the same flow; among them, the \ac{RLNC} scheme stands out as the most widespread solution, due to its simplicity and good performance. Indeed, it hides losses from the upper layers over point-to-point links~\cite{Sundararajan2009, GomezISSC2014}, reduces signalling overhead over opportunistic networks~\cite{Chachulski2007}, and leverages efficient transmissions over wireless mesh networks~\cite{Pahlevani2013, GomezWD2014, Pandi2015}. 


There are also other coding solutions, LT~\cite{Luby2012} and Raptor Codes~\cite{Shokrollahi2006}, having some of the advantages of the \ac{RLNC} scheme: (i) resiliency to packet losses, (ii) low overhead and (iii) suitability for heterogeneous networks and devices. However, they do not provide (i) on-the-fly decoding, (ii) low delay, and (iii) recoding capabilities, which are considered to be some of the greatest advantages of \ac{NC}. 


On the other hand, the main argument questioning the use of \ac{RLNC} is their decoding complexity, $\mathbb{O}(k^3)$, where $k$ is the number of packets to decode, which is considerably higher than other approaches (for instance, LT or Raptor Codes). The coding throughput, defined as the rate at which coding is carried out, is severely impacted by the coding parameters (Galois Field size, $GF(2^q)$, and generation size, $k$). In general, greater values of these parameters would lead to lower coding throughput~\cite{Shojania2007,Sorensen2016}. Furthermore, network overhead, which is mainly due to the transmission of useless packets (linear dependent combinations) and the corresponding protocol header, is also affected by the coding parameters, as Heide~\emph{et al.} analyzed in~\cite{Heide2013}, focusing on the field and generation sizes, and the code sparsity.  

Several works have focused on reducing the coding and decoding complexity proposing different alternatives. One initial  idea to separate the \ac{RLNC} performance from the amount of transmitted data, was to divide the data into generations~\cite{Chou2003}, chunks~\cite{Maymounkov2006}, segments~\cite{Wang2007} or groups~\cite{Gkantsidis2006} of $k$ packets. A more advanced scheme~\cite{Sorensen2015} proposed the use of overlapping generations. More recent proposals exploit sparse coding schemes to reduce the complexity, following the approach of LT or Raptor Codes. For instance, Heide~\emph{et al.} presented a sparse coding scheme with a non-random selection, which would allow different recoding mechanisms. Another example was \ac{TSNC} scheme, by Feizi~\emph{et al.}~\cite{Feizi2012}, which fosters a dynamic increase of the coding density as long as the transmission goes on. 

Feizi's work was later broadened in~\cite{Feizi2014}, where different recoding approaches and more complex tuning functions were proposed. Additionally, the authors of~\cite{Sorensen2015} proposed a practical implementation of the \ac{TSNC} approach. They used a lower bound to estimate the impact of the density on the overall overhead, but as will be seen later, such bound is not very accurate.

Another approaches to reduce both the complexity and the overhead advocate the use of inner and outer codes, such as Fulcrum Codes~\cite{Lucani2014} or \ac{BATS} Codes~\cite{Yang2014}. Fullcrum Codes, which are suitable for heterogeneous devices, foster using different field sizes to decode the received packets. \ac{BATS} Codes reduce the computational complexity by means of an outer code based on a Fountain coding scheme.

\subsection{Tunable Sparse Network Coding}

As already mentioned, Tunable Sparse Network Coding was initially proposed by Feizi~\emph{et al.}~\cite{Feizi2012}, and was later broadened in~\cite{Feizi2014, Sorensen2015}. The main reasoning was that the complexity of the decoding process for traditional \ac{RLNC} solutions is considerably higher than in other approaches, for instance LT or Raptor Codes, which exploit sparse coding techniques. LT or Raptor Codes propose a random distribution of the density, where sparse packets  (built by the combination of a few original packets) are more likely to be sent. 

On the other hand, the legacy \ac{RLNC} scheme generates coded packets by a random selection of coefficients $c_i$, selected from a Galois Field, $GF(2^q)$, for each of the original $k$ packets of a generation:

\begin{equation}
 p' = \sum_{i = 0}^{k}c_i \times p_i
\end{equation}

Using a Sparse Coding Scheme only a set of $w$ randomly selected packets, $W = \{p_{i1}, p_{i2}, \cdots, p_{iw} | p_{ik} \ne p_{ik'}, \forall i_k \ne i_{k'} \}$, from the same generation, are combined to build a coded packet:

\begin{equation}
	p' = \sum_{i \in W}c_i \times p_i
\end{equation}

The use of highly sparse coded packets (low $w$) would increase the throughput of the decoding operations~\cite{Feizi2014}. On the other hand, it can also lead to a greater probability of transmitting linear dependent combinations, thus increasing the corresponding network overhead and jeopardizing the performance. In this sense, based on the observation that the probability of generating linear dependent packets is higher as the transmissions evolves, \ac{TSNC} is proposed to tune the density throughout the transmission. The works~\cite{Feizi2014,Sorensen2015} show the high reduction on computational complexity while the overhead can be slightly increase.

Trullols~\emph{et al.}~\cite{Trullols2011} derived, for the \ac{RLNC} scheme, the exact decoding probability for a successful decoding event after the reception of $N$ coded packets. Zhao \emph{et al.} proposed in~\cite{Zhao2012} a slightly simplified modification, proposing a generalization of the original expression. However, the existing model that analytically characterize sparse coding scheme are far from being accurate or have some limitations, for instance the approximations proposed in~\cite{Heide2013,Feizi2014, Sorensen2014}. In particular, the authors of~\cite{Sorensen2015, Feizi2014, Tassi2016} exploit a lower bound to find a trade-off between complexity and overhead; such lower bound establishes that the probability of receiving a new linearly independent packet by the decoder, when it already has $i$ linearly independent packets is: 

\begin{equation}
P(i,k,d) \ge 1-(1-d)^{k-i}
\label{Eq: bound}
\end{equation}

\noindent where $k$ is the generation size and $d = \frac{w}{k}$ is the corresponding coding density.


Following the approach proposed in~\cite{Sorensen2015} the encoder decide the most appropriate density in order to send packet with the lowest density and the total packets transmitted is not higher than a defined \emph{budget}. The information that is available at the encoder is, in most cases, rather limited; in this case, the destination reports, with a dedicated control packet, the number of useful packets that have been already received. Hence, a more precise model for such probability can lead to the proposal of mechanisms to strongly improve the performance of \ac{TSNC}.

\section{Markov Chain Model}
\label{Sec: Implementation}

We define a set of states $(r,c)$, where $r$ is the current rank of the decoding matrix and $c$ is the number of non-zero columns. Clearly, $r \leq c$. Based on these states, we can establish a discrete Markov Chain, $\mathcal{S}_q(w,k)$. Figure~\ref{fig:markovchain} shows an illustrative example for such chain, with $w = 3$ and $k = 10$. As can be seen, $\mathcal{S}$ is an absorbing process, since the state $(k,k)$ is absorbing and will be eventually reached.

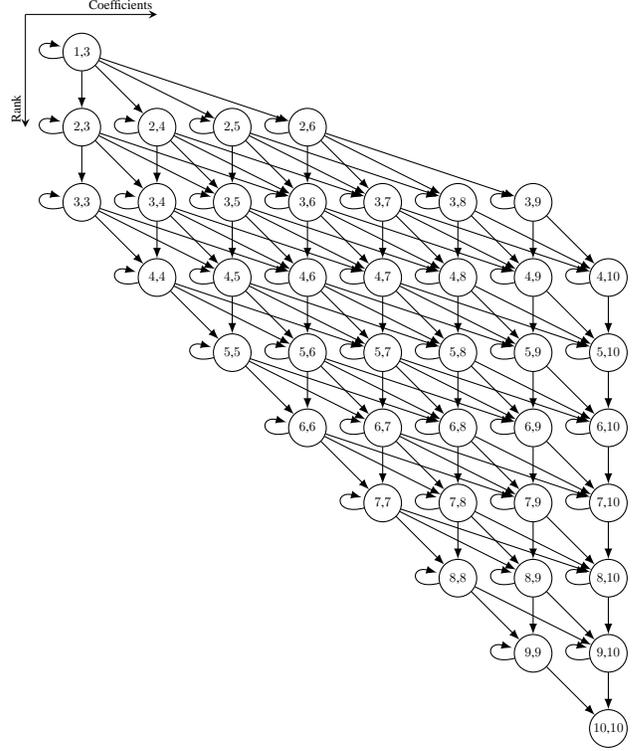
\begin{figure}[t]
\centering
\begin{tikzpicture}[scale=0.5,transform shape,state/.style={draw,circle,fill=white,inner sep=0pt,minimum width=1cm}]

\foreach \r in {2,...,10} {
	\pgfmathsetmacro{\aux}{round(3*\r)}
	\ifthenelse{\r < 3}{\pgfmathsetmacro{\m}{round(3)}}{\pgfmathsetmacro{\m}{\r}};
	\ifthenelse{10 < \aux}{\pgfmathsetmacro{\t}{round(10)}}{\pgfmathsetmacro{\t}{round(\aux)}};
	\foreach  \c in {\m,...,\t} {
		\pgfmathtruncatemacro\rn{int(\r)}
		\pgfmathtruncatemacro\cn{int(\c)}
 	\node[state] (estado_\rn\cn) at (2*\c,24-2*\r) {\pgfmathprintnumber[int trunc]{\r},\pgfmathprintnumber[int trunc]{\c}};
	\ifthenelse{10 = \r}{}{\path[-latex] (estado_\rn\cn) edge [loop left,->,>=latex] node {} (estado_\rn\cn);}
	}
}

\foreach \r in {2,...,10} {
	\pgfmathsetmacro{\aux}{int(3*\r)}
	\ifthenelse{\r < 3}{\pgfmathsetmacro{\m}{round(3)}}{\pgfmathsetmacro{\m}{\r}};
	\ifthenelse{10 < \aux}{\pgfmathsetmacro{\t}{round(10)}}{\pgfmathsetmacro{\t}{round(\aux)}};
	\foreach  \c in {\m,...,\t} {
	\ifthenelse{\r < \c}{
		\pgfmathtruncatemacro\rs{int(\r)}
		\pgfmathtruncatemacro\cs{int(\c)}
		\pgfmathtruncatemacro\re{int(1+\r)}
		\pgfmathtruncatemacro\ce{int({\c})}
				\draw[-latex] (estado_\rs\cs) -- (estado_\re\ce);}{}
		\foreach \j in {1,...,3} {
		\pgfmathtruncatemacro\rs{int(\r)}
		\pgfmathtruncatemacro\cs{int(\c)}
		\pgfmathtruncatemacro\re{int(1+\r)}
		\pgfmathtruncatemacro\ce{int({\c+\j})}
		\ifthenelse{\ce < 11}{
		\draw[-latex] (estado_\rs\cs) -- (estado_\re\ce);}{}
		}
	}
}

\node[state] (estado_13) at (6,22) {\pgfmathprintnumber[int trunc]{1},\pgfmathprintnumber[int trunc]{3}};
\foreach \j in {0,...,3} {
	\pgfmathtruncatemacro\ce{int({3+\j})}
	\draw[-latex] (estado_13) -- (estado_2\ce);

}
\path[-latex] (estado_13) edge [loop left,->,>=latex] node {} (estado_13);

\draw[-stealth] (4.5,23) -- node[pos=1,anchor=south east] {Coefficients} (8,23);
\draw[-stealth] (4.5,23) -- node[rotate=90,pos=1,anchor=south west] {Rank} (4.5,20);

\end{tikzpicture}
\caption{Markov chain for $w= 3$ and $k = 10$. We assume that $q > 1$, otherwise, the states $(2,3)$ and $(3,3)$ would not be feasible.}
\label{fig:markovchain}
\end{figure} 	

 \begin{theorem}(Transition Probabilities)
\label{th:transitionprobabilities}

The transition probability between states $(r,c)$ and $(r+i,c+j)$,  $p_{r,c}(i,j)$ are as follows:

\begin{subequations}
\renewcommand{\theequation}{\theparentequation.\arabic{equation}}
\begin{align}
 & \vartheta_q(r,c,w) \prod_{t=0}^{w-1} \frac{c-t}{k-t} & {} & \text{if} \begin{cases} i = 0\\ j = 0 \end{cases} \\
 & \left[ 1 - \vartheta_q(r,c,w) \right] \prod_{t=0}^{w-1} \frac{c-t}{k-t} & {} & \text{if} \begin{cases} i = 1\\ j = 0 \end{cases} \\
 & \binom{w}{i} \frac{\prod\limits_{t = 0}^{w-i-1} \left(c-t\right) \prod\limits_{t=c}^{c+i-1} \left(k - t \right)}{\prod\limits_{t = 0}^{w-1} \left(k-t\right)} & {} & \text{if} \begin{cases} i = 1\\ j = 1\ldots w  \end{cases} \\
 & \; \;  0 & {} &  \text{otherwise}
\end{align}
\end{subequations}

\noindent where $\vartheta_q(r,c,w)$ is the probability for a randomly generated sparse vector, with $w$ non-zero elements, to be linearly dependent with the already received $r$ vectors, from the $c$-dimensional space, assuming a \emph{Galois Field} $GF(2^q)$.

\end{theorem}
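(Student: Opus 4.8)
The plan is to model each incoming coded packet as a two-stage random experiment and to condition on its sparsity pattern. A sparse packet is determined by choosing $w$ distinct columns out of the $k$ positions, uniformly at random, and assigning a nonzero coefficient from $GF(2^q)$ to each chosen column. Sitting in state $(r,c)$, let $J$ denote the number of these $w$ chosen columns that fall among the $k-c$ currently uncovered columns. The whole transition is governed by $J$ together with one linear-algebra question, so I would first split into the cases $J=0$ and $J\ge 1$.

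The key structural observation, which I would establish first, is an \emph{automatic independence} property when $J=j\ge 1$. After Gaussian elimination, each of the $r$ already-decoded rows is identically zero on the $k-c$ uncovered columns. If the incoming vector has a nonzero entry in even one uncovered column, it cannot lie in the span of those rows, so it is linearly independent of all of them over any field $GF(2^q)$. Consequently the rank must increase by exactly one, i.e. $i=1$, and since the $j$ chosen uncovered columns are distinct, exactly $j$ fresh columns are opened, so $c\mapsto c+j$. This single observation simultaneously forces $i=1$ in the regime $j\ge 1$ and rules out the ``otherwise'' entries $i\ge 2$ (a single vector raises the rank by at most one) and $i=0,\ j\ge 1$ (opening a new column cannot leave the rank unchanged).

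It then remains to compute the sampling probabilities and to insert the dependence probability in the flat case. By sequential sampling without replacement, the probability that all $w$ chosen columns land inside the $c$ covered ones is $\prod_{t=0}^{w-1}\frac{c-t}{k-t}$, and the probability that exactly $j$ of them land among the $k-c$ uncovered ones is the hypergeometric weight $\binom{w}{j}\,\frac{\prod_{t=0}^{w-j-1}(c-t)\,\prod_{t=c}^{c+j-1}(k-t)}{\prod_{t=0}^{w-1}(k-t)}$, which reproduces the factor in the $j=1\ldots w$ row (with $i=1$). For the event $J=0$ the vector lives in the $c$-dimensional covered subspace, so by definition it is linearly dependent on the existing $r$ vectors with probability $\vartheta_q(r,c,w)$; multiplying the all-covered probability by $\vartheta_q$ gives the $(i,j)=(0,0)$ entry and by $1-\vartheta_q$ the $(i,j)=(1,0)$ entry. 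The feasibility constraints are handled for free: the falling factorials vanish whenever $c<w$ or $k-c<j$, and $j>w$ or $j<0$ are impossible, which disposes of the remaining ``otherwise'' cases.

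The part that needs the most care is the \textbf{automatic-independence step}, since the entire collapse of the transition structure (why $i$ can only be $0$ or $1$, and why $i$ is pinned to $1$ as soon as a new column is touched) rests on it; I would prove it by exhibiting the pivot in any newly opened column of the row-reduced decoding matrix. Everything downstream is elementary hypergeometric bookkeeping, and the genuinely field-dependent content is quarantined inside $\vartheta_q(r,c,w)$, which the statement already takes as given, so no Galois-field counting is required in this proof.
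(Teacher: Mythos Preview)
Your proposal is correct and follows essentially the same approach as the paper: both condition on the number $j$ of newly covered columns, use hypergeometric counting for the column-selection probabilities, and defer the field-dependent linear-algebra content to $\vartheta_q(r,c,w)$ in the $j=0$ case. The only noteworthy difference is that you explicitly justify the ``automatic independence'' step (a nonzero entry in an uncovered column cannot lie in the span of rows that are identically zero there), whereas the paper simply asserts that ``if the new packet includes any novel coefficient, the rank is always increased'' without proof; your treatment is slightly more careful on this point, and you derive the product form directly via sequential sampling rather than starting from binomial coefficients and simplifying, but the underlying argument is the same.
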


\begin{proof}
	We use simple combinatorial mathematics to derive the transition probabilities of the absorbing Markov chain that was previously discussed.
	
	For any $(r,c)$ state, if the new packet includes any novel coefficient, the rank is always increased. The number of combinations that would change the state from $(r,c)$ to $(r+1,c+i), \quad i>0$ are:
	
	\begin{equation}
	N_{(r,c)}(r+1,c+i) = \mathbb{C}_{w-i}^c \; \mathbb{C}_{i}^{k-c} =  \binom{c}{w-i} \cdot \binom{k-c}{i}
	\end{equation}
	
	\noindent where $\mathbb{C}_{t}^n$ is the combination of $n$ elements taken $t$ at a time without repetition. 
	
	In addition, the overall number of possible vectors are $\mathbb{C}_{w}^k$. Hence, the corresponding probability is:
	
	\begin{multline}
	p_{r,c}(1,j>0) = \frac{\binom{c}{w-i} \cdot \binom{k-c}{i}}{\binom{k}{w}} = \\ = \frac{\frac{c!}{\left(c-w+i\right)!\left(w-i\right)!} \cdot \frac{\left(k-c\right)!}{\left(k-c-i\right)!i!}}{\frac{k!}{\left(k-w\right)!w!}} = \\
	= \binom{w}{i} \cdot  \frac{\prod\limits_{t=0}^{w-i-1} \left(c-t \right) \cdot \prod\limits_{t=c}^{c+i-1} \left(k-t \right)}{\prod\limits_{t=0}^{w-1} \left(k-t \right)}
	\end{multline}
	
	On the other hand, if the new packet does not include any new coefficient, the corresponding vector could be either linearly dependent or independent, and this is established by $\vartheta$. The combinations that do not increase the number of already received coefficients is $\mathbb{C}_{w}^c$. Hence, the probability of staying at the current state, $(r,c)$ is:
	
	\begin{multline}
	p_{r,c}(0,0) = \vartheta_q(r,c,w) \frac{\binom{c}{w}}{\binom{k}{w}} = \vartheta_q(r,c,w) \frac{\frac{c!}{\left(c-w\right)! w!}}{\frac{k!}{\left(k-w\right)! w!}} = \\ = \vartheta_q(r,c,w) \prod\limits_{t=0}^{w-1} \frac{c-t}{k-t}
	\end{multline}
	
	While the probability of going to $(r+1,c)$ can be calculated as follows:
	
	\begin{multline}
	p_{r,c}(1,0) = \left[ 1- \vartheta_q(r,c,w)\right] \frac{\binom{c}{w}}{\binom{k}{w}} = \ldots = \\ = \left[1-\vartheta_q(r,c,w) \right] \prod\limits_{t=0}^{w-1} \frac{c-t}{k-t}
	\end{multline}
	
	Whenever a packet is received the rank can only increase in one single unit, and thus, $p_{r,c}(i,j) = 0, \quad i > 1$. Likewise, the number of novel coefficients per packet cannot be larger than $w$, so $p_{r,c}(i,j) = 0, \quad j > w$.
	
\end{proof}

\subsection{Empirical modeling of $\vartheta_q(r,c,w)$}

To the best of our knowledge there is not a closed expression for $\vartheta_q(r,c,w)$, so we conducted a Montecarlo analysis to empirically obtain it. We start by synthetically reaching every state of the corresponding \emph{Markov} chain, and then we generated a new vector, enforcing that the $w$ components were only selected from the $c$ already received coefficients. Afterwards we calculated the rank of the corresponding matrix to see whether it had increased; if that was the case the vector was linearly independent. We estimated the corresponding probability by counting the number of successes over a total of $100000$ independent experiments. Note that $\vartheta$ is the probability for a generated vector to be linearly dependent and thus $1 - \vartheta$ corresponds to the probability for a generated vector to be linearly independent, in both cases knowing that the number of received coefficients remains the same, i.e. no novel coefficients are used.

\begin{figure*}[t]
	\centering
	\def \figurewidth {0.2\textwidth}
	\def \figureheight {0.16\textwidth}
	\subfloat[$q = 1, w = 3$]{\pgfplotstableread{phi_fitting_w3_q1.dat}{\dataset} 

\begin{tikzpicture} 
font = \scriptsize, 
\begin{axis}[scale only axis, 
    width=\figurewidth,
    height=\figureheight,
    mark options={solid},
    ymin=0,
    ymax=1,
	xmin=0.4,
	xmax=1,
    ylabel={$\vartheta(r,c,w)$},
    xlabel={$r/c$},
    compat=1.3,
    legend pos= north west,
	legend cell align=left,
	legend style={draw=none, fill=none,legend columns=1,font = \scriptsize}
    ]

\addplot [color=black, mark=square*, mark size=1.5pt, only marks]
table[x index = 0, y index =1] from \dataset;
\addlegendentry{$C, \gamma = \lbrace 10, 5.74 \rbrace$}

\addplot[draw=black,domain=0.4:1,samples=100, forget plot]{x^5.7402};

\addplot [color=black, mark=o, mark size=2pt, only marks, thick]
table[x index = 3, y index =4] from \dataset;
\addlegendentry{$C, \gamma = \lbrace 20, 11.30 \rbrace$}

\addplot[draw=black,domain=0.4:1,samples=100, forget plot]{x^11.2974};

\addplot [color=black, mark=triangle*, mark size=2pt, only marks, mark repeat=3]
table[x index = 6, y index =7] from \dataset;
\addlegendentry{$C, \gamma = \lbrace 63, 25.43 \rbrace$}

\addplot[draw=black,domain=0.4:1,samples=100, forget plot	]{x^25.4308};

\end{axis}
\end{tikzpicture}}
	\subfloat[$q = 1, w = 7$]{\pgfplotstableread{phi_fitting_w7_q1.dat}{\dataset} 

\begin{tikzpicture} 
font = \scriptsize, 
\begin{axis}[scale only axis, 
    width=\figurewidth,
    height=\figureheight,
    mark options={solid},
    ymin=0,
    ymax=1,
	xmin=0.6,
	xmax=1,
    yticklabels={},
    xlabel={$r/c$},
    compat=1.3,
	    legend pos= north west,
	legend cell align=left,
	legend style={draw=none, fill=none,legend columns=1,font = \scriptsize}
    ]

\addplot [color=black, mark=square*, mark size=1.5pt, only marks]
table[x index = 0, y index =1] from \dataset;
\addlegendentry{$C, \gamma = \lbrace 15, 9.64 \rbrace$}

\addplot[draw=black,domain=0.6:1,samples=100, forget plot]{x^9.6426};

\addplot [color=black, mark=o, mark size=2pt, only marks, thick]
table[x index = 3, y index =4] from \dataset;
\addlegendentry{$C, \gamma = \lbrace 25, 16.55 \rbrace$}

\addplot[draw=black,domain=0.6:1,samples=100, forget plot]{x^16.5477};

\addplot [color=black, mark=triangle*, mark size=2pt, only marks, mark repeat=3]
table[x index = 6, y index =7] from \dataset;
\addlegendentry{$C, \gamma = \lbrace 63, 43.02 \rbrace$}

\addplot[draw=black,domain=0.6:1,samples=100, forget plot]{x^43.0219};

\end{axis}
\end{tikzpicture}}
	\subfloat[$q = 1, w = 15$]{\pgfplotstableread{phi_fitting_w15_q1.dat}{\dataset} 

\begin{tikzpicture} 
font = \scriptsize, 
\begin{axis}[scale only axis, 
    width=\figurewidth,
    height=\figureheight,
    mark options={solid},
    ymin=0,
    ymax=1,
	xmin=0.8,
	xmax=1,
    yticklabels={},
    xlabel={$r/c$},
    compat=1.3,
	    legend pos= north west,
 	legend cell align=left,
	legend style={draw=none, fill=none,legend columns=1,font = \scriptsize}
   ]


\addplot  [color=black, mark=square*, mark size=1.5pt, only marks]
table[x index = 0, y index =2] from \dataset;
\addlegendentry{$C, \gamma = \lbrace 25, 16.48 \rbrace$}

\addplot[draw=black,domain=0.8:1,samples=100, forget plot]{x^16.4784};


\addplot [color=black, mark=o, mark size=2pt, only marks, thick]
table[x index = 3, y index =5] from \dataset;
\addlegendentry{$C, \gamma = \lbrace 35, 23.44 \rbrace$}

\addplot[draw=black,domain=0.8:1,samples=100, forget plot]{x^23.4445};


\addplot [color=black, mark=triangle*, mark size=2pt, only marks, mark repeat=1]
table[x index = 6, y index =8] from \dataset;
\addlegendentry{$C, \gamma = \lbrace 63, 42.89 \rbrace$}

\addplot[draw=black,domain=0.8:1,samples=100, forget plot]{x^42.8917};

\end{axis}
\end{tikzpicture}}
	\subfloat[$q = 1, w = 31$]{\pgfplotstableread{phi_fitting_w31_q1.dat}{\dataset} 

\begin{tikzpicture} 
font = \scriptsize, 
\begin{axis}[scale only axis, 
    width=\figurewidth,
    height=\figureheight,
    mark options={solid},
    ymin=0,
    ymax=1,
	xmin=0.9,
	xmax=1,
    xlabel={$r/c$},
    yticklabels={},
    compat=1.3,
	    legend pos= north west,
	legend cell align=left,
	legend style={draw=none, fill=none,legend columns=1,font = \scriptsize}
    ]


\addplot  [color=black, mark=square*, mark size=1.5pt, only marks]
table[x index = 0, y index =2] from \dataset;
\addlegendentry{$C, \gamma = \lbrace 45 30.38 \rbrace$}

\addplot[draw=black,domain=0.9:1,samples=100, forget plot]{x^30.3778};


\addplot [color=black, mark=o, mark size=2pt, only marks, thick]
table[x index = 3, y index =5] from \dataset;
\addlegendentry{$C, \gamma = \lbrace 55, 37.21 \rbrace$}

\addplot[draw=black,domain=0.9:1,samples=100, forget plot]{x^37.2104};
   

\addplot [color=black, mark=triangle*, mark size=2pt, only marks, mark repeat=1]
table[x index = 6, y index =8] from \dataset;
\addlegendentry{$C, \gamma = \lbrace 63, 42.95 \rbrace$}

\addplot[draw=black,domain=0.9:1,samples=100, forget plot]{x^42.9486};

\end{axis}
\end{tikzpicture}}\\
	\subfloat[$q = 3, w = 3$]{\pgfplotstableread{phi_fitting_w3_q3.dat}{\dataset} 

\begin{tikzpicture} 
font = \scriptsize, 
\begin{axis}[scale only axis, 
    width=\figurewidth,
    height=\figureheight,
    mark options={solid},
    ymin=0,
    ymax=1,
	xmin=0.8,
	xmax=1,
    ylabel={$\vartheta(r,c,w)$},
    xlabel={$r/c$},
    compat=1.3,
	    legend pos= north west,
	legend cell align=left,
	legend style={draw=none, fill=none,legend columns=1,font = \scriptsize}
    ]


\addplot [color=black, mark=square*, mark size=1.5pt, only marks]
table[x index = 0, y index =2] from \dataset;
\addlegendentry{$C, \gamma = \lbrace 10, 17.30 \rbrace$}

\addplot[draw=black,domain=0.8:1,samples=100, forget plot]{x^17.2977};


\addplot [color=black, mark=o, mark size=2pt, only marks, thick]
table[x index = 3, y index =5] from \dataset;
\addlegendentry{$C, \gamma = \lbrace 20, 25.14$}

\addplot[draw=black,domain=0.8:1,samples=100, forget plot]{x^25.1355};


\addplot [color=black, mark=triangle*, mark size=2pt, only marks, mark repeat=1]
table[x index = 6, y index =8] from \dataset;
\addlegendentry{$C, \gamma = \lbrace 63, 39.13$}

\addplot[draw=black,domain=0.8:1,samples=100, forget plot]{x^39.1256};

\end{axis}
\end{tikzpicture}}
	\subfloat[$q = 3, w = 7$]{\pgfplotstableread{phi_fitting_w7_q3.dat}{\dataset} 

\begin{tikzpicture} 
font = \scriptsize, 
\begin{axis}[scale only axis, 
    width=\figurewidth,
    height=\figureheight,
    mark options={solid},
    ymin=0,
    ymax=1,
	xmin=0.8,
	xmax=1,
    yticklabels={},
    xlabel={$r/c$},
    compat=1.3,
	    legend pos= north west,
	legend cell align=left,
	legend style={draw=none, fill=none,legend columns=1,font = \scriptsize}
    ]


\addplot[color=black, mark=square*, mark size=1.5pt, only marks]
table[x index = 0, y index =2] from \dataset;
\addlegendentry{$C, \gamma = \lbrace 15, 30.06 \rbrace$}

\addplot[draw=black,domain=0.8:1,samples=100, forget plot]{x^30.0566};


\addplot [color=black, mark=o, mark size=2pt, only marks, thick]
table[x index = 3, y index =5] from \dataset;
\addlegendentry{$C, \gamma = \lbrace 25, 50.63 \rbrace$}

\addplot[draw=black,domain=0.8:1,samples=100, forget plot]{x^50.6334};


\addplot [color=black, mark=triangle*, mark size=2pt, only marks, mark repeat=1]
table[x index = 6, y index =8] from \dataset;
\addlegendentry{$C, \gamma = \lbrace 63, 129.65 \rbrace$}
\addplot[draw=black,domain=0.8:1,samples=100, forget plot]{x^129.6488};

\end{axis}
\end{tikzpicture}}
	\subfloat[$q = 3, w = 15$]{\pgfplotstableread{phi_fitting_w15_q3.dat}{\dataset} 

\begin{tikzpicture} 
font = \scriptsize, 
\begin{axis}[scale only axis, 
    width=\figurewidth,
    height=\figureheight,
    mark options={solid},
    ymin=0,
    ymax=1,
	xmin=0.9,
	xmax=1,
    yticklabels={},
    xlabel={$r/c$},
    compat=1.3,
	    legend pos= north west,
	legend cell align=left,
	legend style={draw=none, fill=none,legend columns=1,font = \scriptsize}
    ]


\addplot[color=black, mark=square*, mark size=1.5pt, only marks]
table[x index = 0, y index =2] from \dataset;
\addlegendentry{$C, \gamma = \lbrace 25, 51.08 \rbrace$}

\addplot[draw=black,domain=0.9:1,samples=100, forget plot]{x^51.0845};


\addplot [color=black, mark=o, mark size=2pt, only marks, thick]
table[x index = 3, y index =5] from \dataset;
\addlegendentry{$C, \gamma = \lbrace 35, 71.23 \rbrace$}

\addplot[draw=black,domain=0.9:1,samples=100, forget plot]{x^71.2263};


\addplot [color=black, mark=triangle*, mark size=2pt, only marks, mark repeat=1]
table[x index = 6, y index =8] from \dataset;
\addlegendentry{$C, \gamma = \lbrace 63, 129.96 \rbrace$}

\addplot[draw=black,domain=0.9:1,samples=100, forget plot]{x^129.9576};

\end{axis}
\end{tikzpicture}}
	\subfloat[$q = 3, w = 31$]{\pgfplotstableread{phi_fitting_w31_q3.dat}{\dataset} 

\begin{tikzpicture} 
font = \scriptsize, 
\begin{axis}[scale only axis, 
    width=\figurewidth,
    height=\figureheight,
    mark options={solid},
    ymin=0,
    ymax=1,
	xmin=0.9,
	xmax=1,
    yticklabels={},
    xlabel={$r/c$},
    compat=1.3,
	    legend pos= north west,
	legend cell align=left,
	legend style={draw=none, fill=none,legend columns=1,font = \scriptsize}
    ]


\addplot [color=black, mark=square*, mark size=1.5pt, only marks]
table[x index = 0, y index =2] from \dataset;
\addlegendentry{$C, \gamma = \lbrace 45, 93.13 \rbrace$}
\addplot[draw=black,domain=0.9:1,samples=100, forget plot]{x^93.1272};


\addplot [color=black, mark=o, mark size=2pt, only marks, thick]
table[x index = 3, y index =5] from \dataset;
\addlegendentry{$C, \gamma = \lbrace 55, 112.70 \rbrace$}
\addplot[draw=black,domain=0.9:1,samples=100, forget plot]{x^112.6971};


\addplot [color=black, mark=triangle*, mark size=2pt, only marks, mark repeat=1]
table[x index = 6, y index =8] from \dataset;
\addlegendentry{$C, \gamma = \lbrace 63, 130.24 \rbrace$}
\addplot[draw=black,domain=0.9:1,samples=100, forget plot]{x^130.2450};

\end{axis}
\end{tikzpicture}}
	\caption{$\vartheta(r,c)$ Vs. $\frac{r}{c}$. Markers correspond to the values obtained with the Montecarlo analysis, while the lines are the fitting curves $\left(\frac{r}{c}\right)^\gamma$.}
	\label{Fig: phiFitting}
\end{figure*}
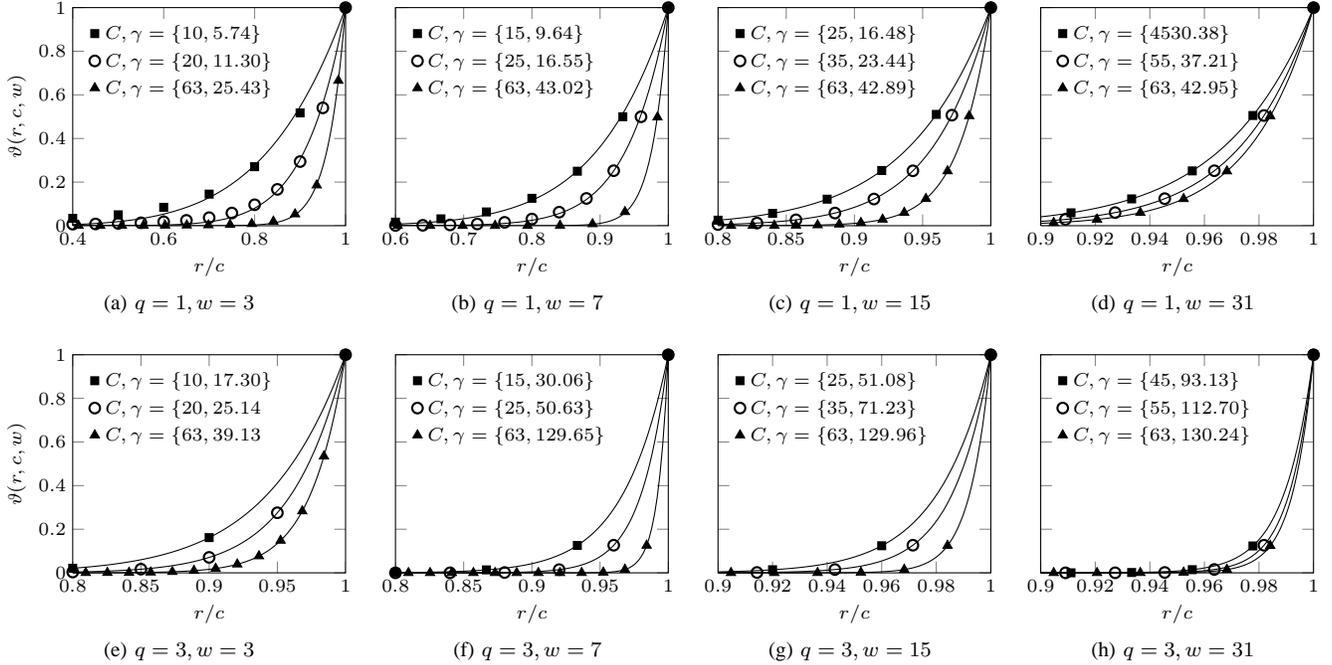

Figure~\ref{Fig: phiFitting} shows how $\vartheta$ varies against the ratio $\frac{r}{c}$ for various combinations of $q$, $w$ and $c$. In all cases we can use a function $\widetilde{\vartheta} = \left(\frac{r}{c}\right)^\gamma$ to approximate the observed behavior. Figure~\ref{Fig: phiFitting} also shows such fitting functions with a solid line, yielding a rather accurate approximation. This fitting is valid for all the possible combinations of $w \le frac{k/2}$ and $q$, and for every $c$ value (from 1 to $k$).

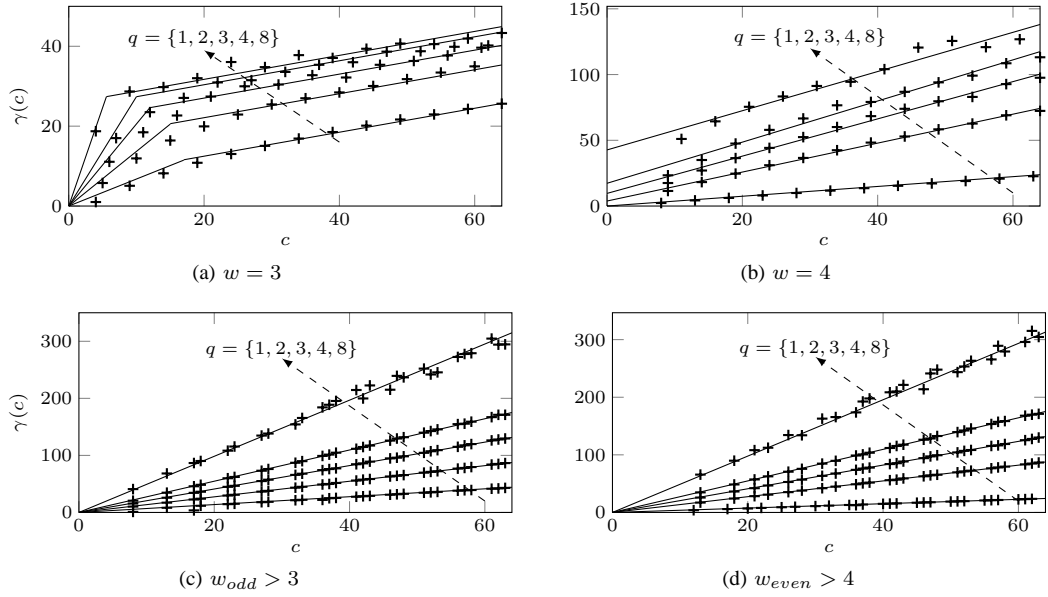
\begin{figure*}[t]
	\def \figurewidth {0.65\columnwidth}
	\def \figureheight {0.3\columnwidth}
	\begin{center}
	\subfloat[$w = 3$]{\pgfplotstableread{gamma_fitting_q1_w3.dat}{\dataset}
\pgfplotstableread{gamma_fitting_q2_w3.dat}{\datasetdos}   
\pgfplotstableread{gamma_fitting_q3_w3.dat}{\datasettres}   
\pgfplotstableread{gamma_fitting_q4_w3.dat}{\datasetcuatro}   
\pgfplotstableread{gamma_fitting_q8_w3.dat}{\datasetocho} 
\begin{tikzpicture} 
font = \scriptsize, 
\begin{axis}[scale only axis, 
    width=\figurewidth,
    height=\figureheight,
    mark options={solid},
    ymin=0,
    ymax=50,
	xmin=0,
	xmax=64,
    ylabel={$\gamma(c)$},
    xlabel={$c$},
    compat=1.3,
    legend pos= south east,
	legend style={draw=none, fill=none,legend columns=2, font = \footnotesize}
    ]

\addplot [color=black, mark=+, thick, only marks, mark repeat=5, mark phase = 1]
table[x index = 0, y index =1] from \dataset;


\addplot[black,domain=0:17, solid, forget plot] (x,0.6760*x);
\addplot[black,domain=17:64, solid, forget plot] (x,0.3*x+6.48);


\addplot [color=black, mark=+, thick, only marks, mark repeat=5, mark phase = 2]
table[x index = 0, y index =1] from \datasetdos;


\addplot[black,domain=0:15, solid, forget plot] (x,1.3672*x);
\addplot[black,domain=15:64, solid, forget plot] (x,0.3*x+16.12);


\addplot [color=black,mark=+, thick, only marks, mark repeat=5, mark phase = 3]
table[x index = 0, y index =1] from \datasettres;

\addplot[black,domain=0:12,  forget plot] (x,2.054*x);
\addplot[black,domain=12:64, forget plot] (x,0.3*x+21.03);


\addplot [color=black, mark=+, thick, only marks, mark repeat=5, mark phase = 4]
table[x index = 0, y index =1] from \datasetcuatro;

\addplot[black,domain=0:10, forget plot] (x,2.737*x);
\addplot[black,domain=10:64, forget plot] (x,0.3*x+24.37);


\addplot [color=black, mark=+, thick, only marks, mark repeat=5, mark phase = 0]
table[x index = 0, y index =1] from \datasetocho;

\draw (axis cs: 20,42) node[rotate=0] {$ q = \{1, 2, 3, 4, 8\}$};

\addplot[black,domain=5.56:64, forget plot] (x,0.3*x + 25.72);
\addplot[black,domain=0:5.56, forget plot] (x,4.92*x);

\draw[-latex,dashed] (axis cs: 40,16) -- (axis cs: 20,39);

\end{axis}
\end{tikzpicture}} 
	\subfloat[$w = 4$]{\pgfplotstableread{gamma_fitting_q1_w_even.dat}{\dataset}
\pgfplotstableread{gamma_fitting_q2_w_even.dat}{\datasetdos}   
\pgfplotstableread{gamma_fitting_q3_w_even.dat}{\datasettres}   
\pgfplotstableread{gamma_fitting_q4_w_even.dat}{\datasetcuatro}   
\pgfplotstableread{gamma_fitting_q8_w_even.dat}{\datasetocho}   

\begin{tikzpicture} 
font = \scriptsize, 
\begin{axis}[scale only axis, 
    width=\figurewidth,
    height=\figureheight,
    mark options={solid},
    ymin=0,
	xmin=0,
	xmax=64,
    xlabel={$c$},
    compat=1.3,
    legend pos= north west,
	legend style={draw=none, fill=none,legend columns=1, font = \footnotesize}
    ]

\addplot [color=black, mark=+, thick, only marks, mark repeat=5]
plot[]
table[x index = 2, y index =3] from \dataset;

\addplot[black,domain=0:64, solid, forget plot] (x,0.3724*x);


\addplot [color=black, mark=+, thick, only marks, mark repeat=5]
plot[]
table[x index = 2, y index =3] from \datasetdos;

\addplot[black,domain=0:64, solid, forget plot] (x,1.1010*x + 3.8170);


\addplot [color=black, mark=+, thick, only marks, mark repeat=5]
plot[]
table[x index = 2, y index =3] from \datasettres;

\addplot[black,domain=0:64, solid, forget plot] (x,1.4167*x+9.6235);


\addplot [color=black, mark=+, thick, only marks, mark repeat=5]
plot[]
table[x index = 2, y index =3] from \datasetcuatro;

\addplot[black,domain=0:64, solid, forget plot] (x,1.5646*x+17.3039);


\addplot [color=black, mark=+, thick, only marks, mark repeat=5, mark phase=3]
plot[]
table[x index = 2, y index =3] from \datasetocho;

\addplot[black,domain=0:64, solid, forget plot] (x,1.4908*x+42.6730);



\draw (axis cs: 30,128) node[rotate=0] {$q = \{1,2,3,4,8\}$};
\draw[-latex,dashed] (axis cs: 60,10) -- (axis cs: 30,120);

\end{axis}
\end{tikzpicture}}   \\
	\subfloat[$w_{odd} > 3$]{\pgfplotstableread{gamma_fitting_q1.dat}{\dataset}
\pgfplotstableread{gamma_fitting_q2.dat}{\datasetdos}   
\pgfplotstableread{gamma_fitting_q3.dat}{\datasettres}   
\pgfplotstableread{gamma_fitting_q4.dat}{\datasetcuatro}   
\pgfplotstableread{gamma_fitting_q8.dat}{\datasetocho} 
\begin{tikzpicture} 
font = \scriptsize, 
\begin{axis}[scale only axis, 
    width=\figurewidth,
    height=\figureheight,
    mark options={solid},
    ymin=0,
    ymax=350,
	xmin=0,
	xmax=64,
    ylabel={$\gamma(c)$},
    xlabel={$c$},
    compat=1.3,
    legend pos= north west,
	legend style={draw=none, fill=none,legend columns=1, font = \footnotesize}
    ]

\addplot [color=black, mark=+, thick, only marks, mark repeat=5]
plot[]
table[x index = 0, y index =1] from \dataset;

\addplot[black,domain=0:64, solid, forget plot] (x,0.69*x);

\addplot [color=black,  mark=+, thick, only marks, mark repeat=5,mark phase=2]
plot[]
table[x index = 2, y index =3] from \dataset;

\addplot [color=black,  mark=+, thick, only marks, mark repeat=5,,mark phase=4]
plot[]
table[x index = 4, y index =5] from \dataset;


\addplot [color=black, mark=+, thick, only marks, mark repeat=5]
plot[]
table[x index = 0, y index =1] from \datasetdos;

\addplot[black,domain=0:64, solid, forget plot] (x,1.3672*x);

\addplot [color=black, mark=+, thick, only marks, mark repeat=5,mark phase=2]
plot[]
table[x index = 2, y index =3] from \datasetdos;

\addplot [color=black, mark=+, thick, only marks, mark repeat=5,mark phase=4]
plot[]
table[x index = 4, y index =5] from \datasetdos;


\addplot [color=black, mark=+, thick, only marks, mark repeat=5]
plot[]
table[x index = 0, y index =1] from \datasettres;

\addplot[black,domain=0:64, solid, forget plot] (x,2.054*x);

\addplot [color=black, mark=+, thick, only marks, mark repeat=5,mark phase=2]
plot[]
table[x index = 2, y index =3] from \datasettres;

\addplot [color=black, mark=+, thick, only marks, mark repeat=5,mark phase=4]
plot[]
table[x index = 4, y index =5] from \datasettres;


\addplot [color=black, mark=+, thick, only marks, mark repeat=5]
plot[]
table[x index = 0, y index =1] from \datasetcuatro;

\addplot[black,domain=0:64, forget plot] (x,2.737*x);

\addplot [color=black, mark=+, thick, only marks, mark repeat=5,mark phase=2]
plot[]
table[x index = 2, y index =3] from \datasetcuatro;

\addplot[color=black, mark=+, thick, only marks, mark repeat=5,mark phase=4]
plot[]
table[x index = 4, y index =5] from \datasetcuatro;


\addplot[color=black, mark=+, thick, only marks, mark repeat=5]
plot[]
table[x index = 0, y index =1] from \datasetocho;

\addplot[black,domain=0:64, samples=10] (x,4.92*x);

\addplot [color=black, mark=+, thick, only marks, mark repeat=5,mark phase=2]
plot[]
table[x index = 2, y index =3] from \datasetocho;

\addplot [color=black, mark=+, thick, only marks, mark repeat=5,mark phase=4]
plot[]
table[x index = 4, y index =5] from \datasetocho;


\draw (axis cs: 30,285) node[rotate=0] {$q = \{1,2,3,4,8\}$};
\draw[-latex,dashed] (axis cs: 60,20) -- (axis cs: 30,270);

\end{axis}
\end{tikzpicture}} 
	\subfloat[$w_{even} > 4$]{\pgfplotstableread{gamma_fitting_q1_w_even.dat}{\dataset}
\pgfplotstableread{gamma_fitting_q2_w_even.dat}{\datasetdos}   
\pgfplotstableread{gamma_fitting_q3_w_even.dat}{\datasettres}   
\pgfplotstableread{gamma_fitting_q4_w_even.dat}{\datasetcuatro}   
\pgfplotstableread{gamma_fitting_q8_w_even.dat}{\datasetocho}   
\begin{tikzpicture} 
font = \scriptsize, 
\begin{axis}[scale only axis, 
    width=\figurewidth,
    height=\figureheight,
    mark options={solid},
    ymin=0,
	xmin=0,
	xmax=64,
    xlabel={$c$},
    compat=1.3,
    legend pos= north west,
	legend style={draw=none, fill=none,legend columns=1, font = \footnotesize}
    ]

\addplot [color=black, mark=+, thick, only marks, mark repeat=5]
plot[]
table[x index = 4, y index =5] from \dataset;

\addplot [color=black, mark=+, thick, only marks, mark repeat=5]
plot[]
table[x index = 6, y index =7] from \dataset;

\addplot [color=black, mark=+, thick, only marks, mark repeat=5]
plot[]
table[x index = 8, y index =9] from \dataset;

\addplot[black,domain=0:64, solid, forget plot] (x,0.3791*x);


\addplot [color=black, mark=+, thick, only marks, mark repeat=5]
plot[]
table[x index = 4, y index =5] from \datasetdos;

\addplot [color=black, mark=+, thick, only marks, mark repeat=5]
plot[]
table[x index = 6, y index =7] from \datasetdos;

\addplot [color=black, mark=+, thick, only marks, mark repeat=5]
plot[]
table[x index = 8, y index =9] from \datasetdos;

\addplot[black,domain=0:64, solid, forget plot] (x,1.3682*x);


\addplot [color=black, mark=+, thick, only marks, mark repeat=5]
plot[]
table[x index = 4, y index =5] from \datasettres;

\addplot [color=black, mark=+, thick, only marks, mark repeat=5]
plot[]
table[x index = 6, y index =7] from \datasettres;

\addplot [color=black, mark=+, thick, only marks, mark repeat=5]
plot[]
table[x index = 8, y index =9] from \datasettres;

\addplot[black,domain=0:64, solid, forget plot] (x,2.0561*x);


\addplot [color=black, mark=+, thick, only marks, mark repeat=5]
plot[]
table[x index = 4, y index =5] from \datasetcuatro;

\addplot [color=black, mark=+, thick, only marks, mark repeat=5]
plot[]
table[x index = 6, y index =7] from \datasetcuatro;

\addplot [color=black, mark=+, thick, only marks, mark repeat=5]
plot[]
table[x index = 8, y index =9] from \datasetcuatro;

\addplot[black,domain=0:64, solid, forget plot] (x,2.7380*x);


\addplot [color=black, mark=+, thick, only marks, mark repeat=5]
plot[]
table[x index = 4, y index =5] from \datasetocho;

\addplot [color=black, mark=+, thick, only marks, mark repeat=5]
plot[]
table[x index = 6, y index =7] from \datasetocho;

\addplot [color=black, mark=+, thick, only marks, mark repeat=5]
plot[]
table[x index = 8, y index =9] from \datasetocho;

\addplot[black,domain=0:64, solid, forget plot] (x,4.8913*x);



\draw (axis cs: 30,285) node[rotate=0] {$q = \{1,2,3,4,8\}$};
\draw[-latex,dashed] (axis cs: 60,20) -- (axis cs: 30,270);

\end{axis}
\end{tikzpicture}}
	\caption{Fitting of $\gamma_q(c)$ function. Markers correspond to the values obtained with the Montecarlo analysis, while the lines are the fitting curves}
	\label{Fig: GammaFitting}
	\end{center}
\end{figure*}



Figure~\ref{Fig: GammaFitting} shows the evolution of $\gamma$ against $c$ for different $w$ and $q$ configurations. As can be seen on the lower figures, for $w>3$, there is a clear linear relationship between $\gamma$ and $c$, and the slope of the corresponding line does only depends on a even or odd value of $w$ and on $q$. On the other hand, for $w=3$ and $w=4$ a different behaviour was observed (see upper figures). In order not to increase the complexity of the model, we have approximated the behaviour of $w=3$ with two different lines, since it can be seen that, for lager $c$'s, the slope of the corresponding function does not depend on $q$, being all of them parallel. As will be seen later, the results are rather accurate, despite there is a non-negligible difference between the observed values and the corresponding fitting. 

With all of the above into account, we can use the following function to estimate the value of $\gamma$:

\begin{align}
& \begin{cases} \gamma = m_{\text{odd}} \cdot c & c < c_0 \\ \gamma = 0.3 \left [  c -  c_0 \left(1 - m_{\text{odd}}\right) \right]& c \geq c_0 \end{cases} & & w &= 3  \\
& \quad \gamma = m_{w_4} \cdot c + b_4    & & w &= 4 \\
& \quad \gamma = m_{\text{even}/\text{odd}} \cdot c     & 4 < & w & < k/2
\end{align}

\noindent where the value of the slope $m$ depends on $q$ and $c_0$ is the point where the slope of $\gamma$ changes for $w = 3$. The corresponding values of $m$ and $c_0$ for the different $q$'s are given in Table~\ref{tab:m_co_q}

\begin{table}
\begin{center}
\caption{$m$ and $c_0$ for different $q$ values}
\label{tab:m_co_q}
\begin{tabular}{|c|c|c|c|c|c|}
\hline
$q$ & 1 & 2 & 3 & 4 & 8 \\
\hline
$m_{\text{odd}}$ & 0.676 & \multirow{2}{*}{1.367} & \multirow{2}{*}{2.055} & \multirow{2}{*}{2.738} & \multirow{2}{*}{4.891} \\

$m_{\text{even}}$ & 0.337 &  &  &  &  \\
\hline
$c_0$ & 17 & 15 & 12 & 10 & 6  \\
\hline
$m_{w_4}$ & 0.337 & 1.101 & 1.417 & 1.565 & 1.491  \\
$b_{w_4}$ & 0 & 3.817 & 9.627 & 17.298 & 42.634  \\
\hline
\end{tabular}
\end{center}
\end{table}

\subsection{Fundamental Matrix}

Once we have established all the transition probabilities, we can build the fundamental matrix for the absorbing Markov chain. According to~\cite{kemeny1960}, the canonical form, for $t$ and $r$ transient and absorbing states, respectively, of such matrix can be defined as follows:

\begin{equation}
P = \left[ \begin{matrix}
I_{r \times r} & 0 \\
R_{t \times r} & Q_{t \times t}
\end{matrix} \right]
\end{equation}

Since there is only one absorbing state $(k,k)$, $I$ is an identity matrix of one single element and $R$ is a column vector with the transition probabilities of all the remaining states to $(k,k)$. Finally $Q$ is a matrix with the transition probabilities between the transient states. We will assume that the first row/column of this matrix correspond to the initial state, i.e. $(1,w)$.






\begin{theorem}{Average number of transitions (Theorem 3.2.4 in~\cite{kemeny1960}).}
	The average number of transitions before being absorbed when starting in a transient state i is the i-th element of the column vector
	
	\begin{equation}
	M = \left( I - Q \right)^{-1} \Gamma
	\label{Eq: Transmissions}	
	\end{equation}
	
\noindent where $I$ is an identity matrix with the same dimension as $Q$, and $\Gamma$ is an all-one column vector. Furthermore, the matrix $N = \left( I - Q \right)^{-1}$ is called the \emph{fundamental matrix} for $P$.	

\end{theorem}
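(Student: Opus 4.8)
The plan is to prove this by \emph{first-step analysis}, reducing the statement to the invertibility of $I-Q$. Write $M_i$ for the expected number of transitions before absorption when the chain starts in transient state $i$; this is exactly the quantity the theorem describes. Conditioning on the outcome of the first transition, with probability $Q_{ij}$ the chain moves to another transient state $j$, after which it still needs $M_j$ transitions on average, and with the complementary probability it reaches the absorbing state $(k,k)$ and stops. Since every outcome costs one transition up front, this yields the recursion $M_i = 1 + \sum_j Q_{ij} M_j$, i.e.\ $M = \Gamma + Q M$ in vector form. Rearranging gives $(I-Q)M = \Gamma$, and provided $I-Q$ is invertible we obtain $M = (I-Q)^{-1}\Gamma$, as claimed.

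The crux is therefore to justify that $I-Q$ is invertible and that the $M_i$ are finite, so that the recursion is legitimate rather than a vacuous manipulation of infinite quantities. Both facts follow from the absorbing property of $\mathcal{S}$. First I would show that $Q^n \to 0$ as $n \to \infty$: because $\mathcal{S}$ is absorbing, from every transient state there is a path to $(k,k)$ of length at most $N_0$ (the number of transient states) with strictly positive probability, so there is some $\varepsilon < 1$ bounding, uniformly over starting states, the probability of \emph{not} being absorbed within $N_0$ steps. Iterating, the probability of still being among the transient states after $m N_0$ steps is at most $\varepsilon^m$, which decays geometrically; in particular each $M_i$ is finite. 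Since the entries of $Q^n$ are dominated by this geometric tail, the Neumann series $\sum_{n=0}^{\infty} Q^n$ converges, and multiplying by $(I-Q)$ telescopes to $I - \lim_{n} Q^{n+1} = I$. Hence $I-Q$ is invertible with $N = (I-Q)^{-1} = \sum_{n=0}^{\infty} Q^n$, the fundamental matrix.

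As a consistency check, the entry $N_{ij} = \sum_{n \geq 0} (Q^n)_{ij}$ counts the expected number of visits to transient state $j$ starting from $i$; since the total time to absorption equals the total number of visits to transient states, $M_i = \sum_j N_{ij} = (N\Gamma)_i$, recovering $M = N\Gamma$ and matching the first-step derivation. The main obstacle is the geometric-decay estimate that pins down $Q^n \to 0$; once that is in hand, the invertibility of $I-Q$ and the vector identity are routine linear algebra.
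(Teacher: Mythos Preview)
Your argument is correct and is essentially the classical Kemeny--Snell proof: first-step analysis gives $(I-Q)M=\Gamma$, and the absorbing property forces $Q^n\to 0$ geometrically, which makes the Neumann series converge and yields $N=(I-Q)^{-1}=\sum_{n\ge 0}Q^n$. The consistency check via expected visit counts is also standard and matches the textbook development.

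There is nothing to compare against in the paper itself: the authors do not prove this statement but simply quote it as Theorem~3.2.4 of~\cite{kemeny1960} and then apply it. So your proposal supplies a proof where the paper supplies only a citation; the content you have written is exactly what one finds in the cited reference.
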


\begin{corollary}{Average number of transmissions.}
	The average number of transitions defines the average number of transmissions, since in our case, a transition always correspond to a packet transmission, and the first element of $M$ would correspond to the number of transitions that are required to hit the $(k,k)$ state from the initial one.
\end{corollary}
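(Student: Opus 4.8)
The plan is to treat this as a direct consequence of the preceding theorem, the only real work being to justify the dictionary between the abstract quantity ``number of transitions to absorption'' and the operational quantity ``number of transmitted packets.'' First I would establish the one-to-one correspondence between steps of the chain and transmissions. By Theorem~\ref{th:transitionprobabilities}, the outgoing probabilities from any transient state $(r,c)$ sum to one: the self-loop mass $\vartheta_q(r,c,w)\prod_{t=0}^{w-1}(c-t)/(k-t)$, together with the mass $[1-\vartheta_q(r,c,w)]\prod_{t=0}^{w-1}(c-t)/(k-t)$ on $(r+1,c)$ and the mass on the states $(r+1,c+j)$, $j=1,\dots,w$, account for every freshly generated sparse vector. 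Consequently each discrete time step of $\mathcal{S}_q(w,k)$ models exactly one generated-and-received coded packet --- either a self-loop (a linearly dependent packet covering no new column) or a move that increases the rank --- with no idle or unmodeled steps. Hence one transition equals one transmission, and the expected count of the former equals the expected count of the latter.

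Next I would identify the event ``successful decoding'' with absorption. Decoding is complete precisely when the decoding matrix attains full rank $k$ over all $k$ columns, i.e. when the chain reaches $(k,k)$; as noted when $\mathcal{S}_q(w,k)$ was introduced, $(k,k)$ is the unique absorbing state and is reachable from every transient state, so absorption (equivalently, decoding) occurs with probability one. The number of transmissions required to decode is therefore exactly the number of transitions before absorption, a random variable whose expectation, started from transient state $i$, is delivered by the preceding theorem as the $i$-th entry of $M=(I-Q)^{-1}\Gamma$.

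Finally I would pin down the starting index. The process effectively begins at $(1,w)$: the first received packet is a nonzero sparse vector, so it deterministically raises the rank to $1$ and covers $w$ columns. Having placed $(1,w)$ as the first transient state (the first row and column of $Q$), the first component of $M$ is precisely the expected number of transitions --- equivalently transmissions --- needed to travel from $(1,w)$ to $(k,k)$, which is the claim.

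The step I expect to require the most care is not any computation but this last piece of bookkeeping: one must fix a convention for whether the deterministic first transmission into $(1,w)$ is itself counted, so that ``the first element of $M$'' is interpreted consistently, and one must confirm that the erasure-free reading of Theorem~\ref{th:transitionprobabilities} (probabilities summing to one, with self-loops arising only from linear dependence rather than from losses) is the intended one, since any unmodeled packet loss would introduce state-preserving steps and thereby break the clean transition--transmission bijection on which the corollary rests.
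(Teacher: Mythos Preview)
Your proposal is correct and, in fact, considerably more thorough than the paper itself: the paper offers no proof for this corollary at all, treating it as a self-evident restatement of the preceding theorem once one observes that a step of the chain is, by construction, a packet reception. Your explicit verification that the outgoing probabilities of Theorem~\ref{th:transitionprobabilities} sum to one, your identification of absorption with full-rank decoding, and your remark on the off-by-one convention for the first packet into $(1,w)$ are all points the paper leaves entirely implicit; your caution about the erasure-free assumption is likewise apt, since the paper only later modifies the transition probabilities to handle losses.
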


\begin{theorem}{Probability of being in state $j$ after $N$ transitions (Theorem 3.1.1 in~\cite{kemeny1960}).}
	The entry $p_{ij}^{(N)}$ of $P^N$ is the probability of being in state $j$ after $N$ transitions, provided that the chain was started in $i$
	
\end{theorem}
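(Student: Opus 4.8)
The plan is to prove this by induction on $N$, using the Chapman--Kolmogorov identity, which itself follows from the law of total probability together with the Markov (memoryless) property of the chain. Let $X_N$ denote the state occupied after $N$ transitions, so that the claim to be established is $\left(P^N\right)_{ij} = \Pr(X_N = j \mid X_0 = i)$ for every pair of states $i,j$ and every $N \ge 0$. Since the state space of $\mathcal{S}_q(w,k)$ is finite, all the matrix powers are well defined and all the sums below are finite, so no convergence issues arise.

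For the base case I would take $N = 0$. As $P^0 = I$, its $(i,j)$ entry is the Kronecker delta $\delta_{ij}$, which matches $\Pr(X_0 = j \mid X_0 = i)$; and by the very definition of the transition matrix, the $(i,j)$ entry of $P^1 = P$ is $p_{ij} = \Pr(X_1 = j \mid X_0 = i)$, so the statement holds at the bottom of the induction. For the inductive step I would assume $\left(P^N\right)_{ij} = p_{ij}^{(N)} = \Pr(X_N = j \mid X_0 = i)$ for all $i,j$. Conditioning on the state $k$ occupied after $N$ transitions and applying the law of total probability gives
\begin{equation}
\Pr(X_{N+1} = j \mid X_0 = i) = \sum_{k} \Pr(X_{N+1} = j \mid X_N = k,\, X_0 = i)\,\Pr(X_N = k \mid X_0 = i).
\end{equation}

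This is where the Markov property enters, and it is the only substantive step: because the chain is memoryless, the conditioning on the initial state $X_0 = i$ may be discarded from the first factor, so that $\Pr(X_{N+1} = j \mid X_N = k,\, X_0 = i) = p_{kj}$. Substituting the inductive hypothesis $\Pr(X_N = k \mid X_0 = i) = p_{ik}^{(N)}$ then yields
\begin{equation}
\Pr(X_{N+1} = j \mid X_0 = i) = \sum_{k} p_{ik}^{(N)}\, p_{kj} = \left(P^N P\right)_{ij} = \left(P^{N+1}\right)_{ij},
\end{equation}
where the penultimate equality is merely the definition of matrix multiplication. The right-hand side is exactly $p_{ij}^{(N+1)}$, which closes the induction.

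I do not anticipate any genuine obstacle here, since this is a classical fact about finite Markov chains; the one place that requires care rather than mere bookkeeping is the appeal to the Markov property to drop the dependence on the full past when collapsing the joint conditioning to the single-step probability $p_{kj}$. Everything else is the law of total probability and the definition of the matrix product. For our purposes the result is what lets us read off, from the $(i,j)$ entry of $P^N$, the probability that the decoder is in a given rank/coverage configuration $(r,c)$ after exactly $N$ received packets, starting from the initial state $(1,w)$.
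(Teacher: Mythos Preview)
Your proof is correct and is the standard induction argument via the Chapman--Kolmogorov relation. The paper itself does not supply a proof of this statement: it is quoted as Theorem~3.1.1 of Kemeny and Snell and used as a black box, so there is no approach to compare against beyond noting that your argument is exactly the classical one found in that reference.
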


\begin{corollary}{Probability of succesfully decode a generation.}
	Since the chain always starts from the initial state $1 \rightarrow (1,w)$ and there is only one absorbing state, $t+r \rightarrow (k,k)$, the probability of successfully decode a generation after $N$ transmissions is:
	$\xi(N) = p_{1,\textbf{t+r}}^{(N)}$
\end{corollary}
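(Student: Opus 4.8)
The plan is to obtain the corollary as a direct specialization of the immediately preceding theorem (Theorem 3.1.1 in~\cite{kemeny1960}) to our chain, after pinning down which event in the state space corresponds to ``successful decoding.'' First I would argue that a generation is decoded exactly when its decoding matrix attains full rank $k$. In the $(r,c)$ parametrization this means $r=k$, and since the states satisfy $r\leq c\leq k$, the equality $r=k$ forces $c=k$ as well. Hence the unique configuration representing a decoded generation is the state $(k,k)$, which the excerpt has already identified as the single absorbing state of $\mathcal{S}_q(w,k)$.

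Next I would invoke the preceding theorem with the two relevant indices fixed. By construction the chain starts deterministically from the initial state $(1,w)$, which we have labeled as the first index $i=1$ (the first row/column of $P$), while $(k,k)$ is the last index $j=t+r$. Substituting $i=1$ and $j=t+r$ into the theorem's statement gives that the probability of being in $(k,k)$ after $N$ transitions is precisely the entry $p_{1,t+r}^{(N)}$ of $P^N$.

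The one step that requires care, and the main (if modest) obstacle, is matching ``being in state $(k,k)$ after exactly $N$ transitions'' with the intended quantity ``the generation is successfully decoded within $N$ transmissions.'' Here I would use the absorbing property: once the walk enters $(k,k)$ it never leaves, since its self-loop has probability one. Indeed, Theorem~\ref{th:transitionprobabilities} gives $p_{k,k}(0,0)=\vartheta_q(k,k,w)\prod_{t=0}^{w-1}\frac{k-t}{k-t}=1$, because at full rank every newly generated vector is linearly dependent so $\vartheta_q(k,k,w)=1$, and no transition with $i=1$ is feasible from $r=k$. Consequently the event ``in $(k,k)$ at step $N$'' coincides with ``absorbed at some step $\leq N$,'' i.e.\ decoding completed by the $N$-th step.

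Finally I would combine this with the earlier corollary identifying each Markov transition with a single packet transmission, so that the step index $N$ literally counts transmissions. This yields $\xi(N)=p_{1,t+r}^{(N)}$ and completes the argument. The proof is therefore essentially a relabeling plus the absorbing-state observation; no further combinatorial computation is needed beyond what Theorem~\ref{th:transitionprobabilities} already supplies.
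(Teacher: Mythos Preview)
Your proposal is correct and follows the same reasoning the paper relies on: the corollary is obtained by specializing the cited Kemeny--Snell theorem to the fixed start state $(1,w)$ and the unique absorbing state $(k,k)$, together with the identification of transitions with transmissions. The paper does not supply a separate proof for this corollary---the justification is folded into the statement itself---so your explicit treatment of the absorbing property (to equate ``in $(k,k)$ at step $N$'' with ``decoded by step $N$'') is, if anything, more careful than what the paper provides.
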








Another parameter of interest would be the probability of increasing the rank of the corresponding decoding matrix with every transmission. This would allow establishing dynamic tuning schemes for the coding density, since lower densities would yield lower coding/decoding times, but they might as well lead to a higher number of transmissions.

\begin{theorem}{Transient Probabilities (Theorem 3.5.7 in~\cite{kemeny1960}).}
	The probability of visiting a state $j$ when starting a transient state $i$ is the $(i,j)$ entry of the transient probabilities matrix $H$:
	
	\begin{equation}
	H = \left( N - I \right) \cdot N_d^{-1}
	\end{equation}
	
	\noindent where $N_d$ is a diagonal matrix with the same diagonal of $N$.
\end{theorem}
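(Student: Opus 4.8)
The plan is to interpret the fundamental matrix probabilistically and then convert the \emph{expected number of visits} into the \emph{probability of visiting}, exactly as for a general absorbing chain; the identity $H=(N-I)N_d^{-1}$ then emerges by separating the diagonal contribution from the off-diagonal ones. Since absorption into $(k,k)$ occurs almost surely (the chain is absorbing, as noted after Figure~\ref{fig:markovchain}), the Neumann series $N=(I-Q)^{-1}=\sum_{n=0}^{\infty}Q^{n}$ converges entrywise, and its entry $n_{ij}$ carries the reading
\begin{equation}
n_{ij}=\sum_{n=0}^{\infty}\left(Q^{n}\right)_{ij}=\mathbb{E}_i\!\left[\,\#\{\, n\ge 0 : X_n=j,\ n<T \,\}\,\right],
\end{equation}
the expected number of times the chain occupies transient state $j$ strictly before the absorption time $T$, given a start in $i$. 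In particular $n_{jj}\ge 1$, so $N_d$ is invertible and the statement is well posed.

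First I would treat the off-diagonal entries. Let $h_{ij}$ be the probability that, started in $i$, the chain ever reaches $j$, and let $\tau_j=\min\{n\ge 0:X_n=j\}$ be the first hitting time. Conditioning on $\{\tau_j<\infty\}$ and applying the strong Markov property at $\tau_j$, the visits to $j$ accrued from time $\tau_j$ onward are distributed as those of a chain freshly started in $j$, while no visits occur on $\{\tau_j=\infty\}$. This yields $n_{ij}=h_{ij}\,n_{jj}$ for $i\ne j$, hence $h_{ij}=n_{ij}/n_{jj}$.

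Next I would handle the diagonal. Writing $h_{jj}$ for the probability of \emph{returning} to $j$ (a visit at some time $n\ge 1$) after the certain initial occupancy at time $0$, the successive returns obey a geometric renewal structure: the time-$0$ visit is counted once, and each time the chain sits in $j$ it returns once more with probability $h_{jj}$ or is otherwise absorbed without revisiting $j$. The strong Markov property at the return time gives the renewal relation $n_{jj}=1+h_{jj}\,n_{jj}$, so $h_{jj}=(n_{jj}-1)/n_{jj}$. Collecting both cases, column $j$ of $H$ equals column $j$ of $N$ with its diagonal entry diminished by one, the whole column divided by $n_{jj}$; in matrix form this is exactly $H=(N-I)N_d^{-1}$, since subtracting $I$ enacts the $-1$ on each diagonal entry while leaving off-diagonal entries untouched, and right multiplication by $N_d^{-1}$ scales column $j$ by $1/n_{jj}$.

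The step I expect to demand the most care is the first-passage decomposition underpinning both $n_{ij}=h_{ij}\,n_{jj}$ and $n_{jj}=1+h_{jj}\,n_{jj}$: each rests on the strong Markov property applied at the possibly infinite hitting time $\tau_j$, and one must keep the accounting of the time-$0$ occupancy consistent between the diagonal and off-diagonal cases so that the $-I$ correction falls only on the diagonal. Everything else is routine, and the finiteness of $N$ together with the invertibility of $N_d$ is guaranteed by the absorbing structure already established.
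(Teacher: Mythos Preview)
Your argument is correct and is essentially the classical derivation of this identity for absorbing chains: interpret $n_{ij}$ as an expected visit count, use a first-passage decomposition to get $n_{ij}=h_{ij}n_{jj}$ off the diagonal and a renewal relation $n_{jj}=1+h_{jj}n_{jj}$ on the diagonal, and then read off $H=(N-I)N_d^{-1}$. The bookkeeping about the time-$0$ visit is handled consistently, and the finiteness and invertibility conditions are justified by the absorbing structure.

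There is, however, nothing to compare against in the paper itself. The theorem is \emph{not} proved in the paper; it is quoted verbatim as Theorem~3.5.7 of Kemeny and Snell~\cite{kemeny1960} and used as a black box to define $H$ and subsequently the rank-increase probability $\delta(r)$. So your proof is not an alternative route to the paper's proof but rather a reconstruction of the cited result, and it matches the standard argument given in that reference.
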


\begin{corollary}{Probability of receiving a linearly independent packet.}
	We define a set of states $\mathbf{s}(r)$ as all the states from the chain where the rank equals $r$, $\mathbf{s}(r) = \forall (r,i) \in \mathcal{S}$. Hence, the probability of increasing the rank of the matrix when $r$ independent packets have been already received can be calculated as follows:
	
	\begin{equation}
	\delta(r) = \sum\limits_{\forall i \in \mathbf{s}(r)} H(0,i) \cdot \left( 1 - p_{r,c}(0,0) \right)
	\label{Eq: ProbIncrRank}
	\end{equation} 
\end{corollary}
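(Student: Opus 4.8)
The plan is to obtain $\delta(r)$ by conditioning on which of the rank-$r$ states the chain actually occupies and then applying the one-step rank-increase probability in that state. Two structural consequences of Theorem~\ref{th:transitionprobabilities} drive the argument. First, among all transitions out of a state $(r,c)$, only the self-loop $p_{r,c}(0,0)$ leaves the rank unchanged: every transition with nonzero probability other than the self-loop has $i=1$, and there is no transition that covers new columns while keeping the rank fixed, because a packet carrying a coefficient outside the current $c$-dimensional support is automatically linearly independent of the previously received vectors. Hence, conditioned on the chain sitting in $(r,c)$, the probability that the next received packet raises the rank is exactly $1 - p_{r,c}(0,0)$, which supplies the second factor in each summand.

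Second, I would use the same transition structure to argue that the chain visits exactly one state at each rank level. Since the rank is non-decreasing and can only step up by one unit, and since no transition changes $c$ without also incrementing $r$, once the chain first reaches rank $r$ in some state $(r,c)$ it remains there through self-loops and then departs directly to rank $r+1$, never touching another rank-$r$ state. The events $\{\text{the chain passes through } (r,c)\}$, indexed by the feasible $c$ with $(r,c)\in\mathcal{S}$, are therefore mutually exclusive, and because absorption happens only at $(k,k)$ with $r\le k$ the chain is certain to reach rank $r$, so these events are also exhaustive. By the transient-probability theorem (Theorem 3.5.7 in~\cite{kemeny1960}), $H(0,i)$ is the probability of visiting state $i$ starting from the initial state $(1,w)$; combined with the unique-occupancy property this shows that $\{H(0,i)\}_{i\in\mathbf{s}(r)}$ is precisely the distribution describing which rank-$r$ state the chain occupies, so that it sums to one. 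The law of total probability then gives $\delta(r) = \sum_{i\in\mathbf{s}(r)} H(0,i)\,(1 - p_{r,c}(0,0))$, with $c$ read off from the index $i=(r,c)$, as claimed.

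The step I expect to be the main obstacle is the precise identification of $H(0,i)$ with the occupancy probability over rank-$r$ states, rather than the final bookkeeping. This identification rests entirely on the unique-occupancy property, and it requires some care on the diagonal and at the initial rank level: the matrix $H=(N-I)N_d^{-1}$ returns, for $i$ equal to the starting index, the \emph{return} probability rather than $1$, whereas the physical quantity we want there is the certainty that the chain starts in $(1,w)$. I would therefore make explicit that for every rank $r>1$ the starting state lies strictly below rank $r$, so the relevant entries $H(0,i)$ are genuine first-passage probabilities $N_{0,i}/N_{i,i}$, and I would treat the initial level $r=1$, where $\mathbf{s}(1)=\{(1,w)\}$, separately by taking the passage probability of the start state to be one. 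Establishing this reading of $H$ cleanly is the crux; once it is in place, the total-probability sum is immediate.
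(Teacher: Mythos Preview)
The paper states this corollary without proof, treating it as an immediate consequence of the transient-probabilities theorem and the transition structure of Theorem~\ref{th:transitionprobabilities}. Your proposal supplies exactly the argument the paper leaves implicit: conditioning on which rank-$r$ state is occupied, using $1-p_{r,c}(0,0)$ as the one-step rank-increase probability from that state, and weighting by the visiting probabilities $H(0,i)$. In that sense your approach coincides with what the paper intends, only made rigorous.

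Your additional care is well placed and actually goes beyond what the paper offers. The observation that no transition has $(i,j)=(0,j)$ with $j>0$---so the chain hits precisely one state per rank level and the $H(0,i)$ over $\mathbf{s}(r)$ form a genuine probability distribution---is the correct structural justification for the law-of-total-probability step, and the paper never makes this explicit. Likewise, your caveat about the diagonal of $H$ at the starting state $(1,w)$ is a real subtlety the paper ignores; your resolution (treat $r=1$ separately, noting $\mathbf{s}(1)=\{(1,w)\}$ is a singleton visited with probability one) is the right patch. None of this alters the final formula, so your proof is correct and faithful to the paper's intent, just more careful.
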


\subsection{Impact of errors}

So far we have assumed and ideal wireless channel between the transmitter and the receiver. wever the model can be easily broadened so as to consider packet-erasure links. For that we just need to modify the corresponding transition probabilities as follows:

\begin{equation}
\widetilde{p_{r,c}}(i,j) = \begin{cases} p_{r,c}(i,j) \left(1 - \alpha \right) & (i,j) \neq (0,0) \\ p_{r,c}(i,j) \left(1-\alpha \right) + \alpha \quad & (i,j) = (0,0) \end{cases}
\end{equation}

\noindent where $\alpha$ is the frame error rate of the wireless link.

We can, for instance see, that $\widetilde{Q} = \left(1 - \alpha \right) Q + \alpha I$, and hence:

\begin{multline}
\widetilde{M} = \left( I - \widetilde{Q} \right)^{-1} \Gamma = \left[I - \left(\left(1 - \alpha \right) Q + \alpha I \right) \right]^{-1} = \\ = \frac{\left( I - Q \right)^{-1}}{1-\alpha}
\end{multline}

\section{Simulation and Model Validation}
\label{Sec: Results}

In this Section we assess the validity of the proposed model and all the results that were previously discussed by means of an extensive simulation campaign. We use the M4RIE library~\cite{DBLP:journals} to transmit (using TSNC) 10000 different generations, in order to ensure statistical tightness of the corresponding results.

Figure~\ref{Fig: ProbabilityIncreasingRank} shows the probability of receiving a linearly independent packet against the current rank at the receiver, Eq. \ref{Eq: ProbIncrRank}. As can be seen, the probability is close to 1 until the rank equals a large value. Afterwards, it decreases quite sharply, especially for low values of $q$ and $w$. We can also see the difference between the simulation results (solid line) and the proposed model (markers). The model show a good accuracy, the  mean squared error in the worst case, $k=128$, $q=3$, is $3.14 \cdot 10^{-4}$. On the other hand, the lower bound that was previously discussed, which has been used in various works until now, yields a much lower probability, with very little accuracy. Since the corresponding operations that need to be performed are much faster if the coding density ($w$) is low, this result is really interesting, since it shows that there is not any disadvantage in using a low $w$ until the rank at the receiver is quite high; at that moment increasing $w$ would probably improve the corresponding performance.

\begin{figure*}[t]
	\centering
	\def \figurewidth {0.2\textwidth}
	\def \figureheight {0.16\textwidth}	
	\subfloat[][$q=1, k=64$]{  \pgfplotstableread{probability_increasingRank_q1.dat}{\dataset}
\begin{tikzpicture} 
font = \scriptsize, 
\begin{axis}[scale only axis, 
    width=\figurewidth,
    height=\figureheight,
    mark options={solid},
    ymin=0,
    ymax=1,
	xmin=1,
	xmax=64,
    ylabel={$\gamma(c)$},
    xlabel={$c$},
    compat=1.3,
    legend pos= south west,
		legend cell align=left,
	legend style={draw=none, fill=none,legend columns=1, font = \footnotesize}
    ]

\addplot [color=black, mark=square*, mark size=1.2pt,only marks]
plot[]
table[x index = 0, y index =1] from \dataset;
\addlegendentry{$w=3$}

\addplot [color=black, dashed, mark=square,mark size=1.5pt, mark repeat=5]
plot[]
table[x index = 0, y index =3] from \dataset;
\addlegendentry{Bound$_{w=3}$}

\addplot [color=black, mark=*, mark size=1.2pt,only marks]
plot[]
table[x index = 4, y index =5] from \dataset;
\addlegendentry{$w=15$}

\addplot [color=black, solid, forget plot]
plot[]
table[x index = 0, y index =2] from \dataset;

\addplot [color=black, solid,forget plot]
plot[]
table[x index = 4, y index =6] from \dataset;

\addplot [color=black, dashed, mark=o, mark size=1.5pt, mark repeat=5]
plot[]
table[x index = 4, y index =7] from \dataset;
\addlegendentry{Bound$_{w=15}$}

\end{axis}
\end{tikzpicture}}
	\subfloat[][$q=1, k=64, c \in (40, 64)$]{\pgfplotstableread{probability_increasingRank_q1.dat}{\dataset}
\begin{tikzpicture} 
font = \scriptsize, 
\begin{axis}[scale only axis, 
    width=\figurewidth,
    height=\figureheight,
    mark options={solid},
    ymin=0,
    ymax=1,
	xmin=50,
	xmax=63,
    yticklabels={},
    xlabel={$c$},
    compat=1.3,
    legend pos= south east,
	legend style={draw=none, fill=none,legend columns=1, font = \footnotesize}
    ]

\addplot [color=black, mark=square*, only marks, mark size=1.2pt]
plot[]
table[x index = 0, y index =1] from \dataset;

\addplot [color=black, mark=*, only marks, mark size=1.5pt]
plot[]
table[x index = 4, y index =5] from \dataset;

\addplot [color=black, solid, forget plot,  mark size=1.5pt]
plot[]
table[x index = 0, y index =2] from \dataset;

\addplot [color=black, dashed, mark=square, mark size=1.2pt, mark repeat=3]
plot[]
table[x index = 0, y index =3] from \dataset;

\addplot [color=black, solid,forget plot]
plot[]
table[x index = 4, y index =6] from \dataset;

\addplot [color=black, dashed, mark=o, mark size=1.5pt, mark repeat=3]
plot[]
table[x index = 4, y index =7] from \dataset;

\end{axis}
\end{tikzpicture}}	
	\subfloat[][$q=1, k=128$]{  \pgfplotstableread{probability_increasingRank_q1_128.dat}{\dataset}
\begin{tikzpicture} 
font = \scriptsize, 
\begin{axis}[scale only axis, 
    width=\figurewidth,
    height=\figureheight,
    mark options={solid},
    ymin=0,
    ymax=1,
	xmin=1,
	xmax=128,
    ylabel={},
    yticklabels={}, 
    xlabel={$c$},
    compat=1.3,
    legend pos= south west,
    legend cell align=left,
	legend style={draw=none, fill=none,legend columns=1, font = \footnotesize}
    ]

\addplot [color=black, mark=square*, mark size=1.2pt,only marks, mark repeat=2]
plot[]
table[x index = 0, y index =1] from \dataset;
\addlegendentry{$w=3$}

\addplot [color=black, dashed, mark=square,mark size=1.5pt, mark repeat=9]
plot[]
table[x index = 0, y index =3] from \dataset;
\addlegendentry{Bound$_{w=3}$}

\addplot [color=black, mark=*, mark size=1.2pt,only marks, mark repeat=9]
plot[]
table[x index = 4, y index =5] from \dataset;
\addlegendentry{$w=15$}

\addplot [color=black, solid, forget plot]
plot[]
table[x index = 0, y index =2] from \dataset;

\addplot [color=black, solid,forget plot]
plot[]
table[x index = 4, y index =6] from \dataset;

\addplot [color=black, dashed, mark=o, mark size=1.5pt, mark repeat=5]
plot[]
table[x index = 4, y index =7] from \dataset;
\addlegendentry{Bound$_{w=15}$}

\end{axis}
\end{tikzpicture}}
	\subfloat[][$q=1, k=128, c \in (100, 128)$]{\pgfplotstableread{probability_increasingRank_q1_128.dat}{\dataset}
\begin{tikzpicture} 
font = \scriptsize, 
\begin{axis}[scale only axis, 
    width=\figurewidth,
    height=\figureheight,
    mark options={solid},
    ymin=0,
    ymax=1,
	xmin=100,
	xmax=127,
    ylabel={},
    yticklabels={},
    xlabel={$c$},
    compat=1.3,
    legend pos= south west,
    legend cell align=left,
	legend style={draw=none, fill=none,legend columns=1, font = \footnotesize}
    ]

\addplot [color=black, mark=square*, only marks,  mark size=1.2pt]
plot[]
table[x index = 0, y index =1] from \dataset;

\addplot [color=black, mark=*, only marks,  mark size=1.5pt]
plot[]
table[x index = 4, y index =5] from \dataset;

\addplot [color=black, solid, forget plot]
plot[]
table[x index = 0, y index =2] from \dataset;

\addplot [color=black, dashed, mark=square, mark size=1.2pt, mark repeat=5]
plot[]
table[x index = 0, y index =3] from \dataset;

\addplot [color=black, solid,forget plot]
plot[]
table[x index = 4, y index =6] from \dataset;

\addplot [color=black, dashed, mark=o, mark size=1.5pt, mark repeat=5]
plot[]
table[x index = 4, y index =7] from \dataset;

\end{axis}
\end{tikzpicture}}  \\
	
	\subfloat[][$q=3, k=64$]{  \pgfplotstableread{probability_increasingRank_q3.dat}{\dataset}
\begin{tikzpicture} 
font = \scriptsize, 
\begin{axis}[scale only axis, 
width=\figurewidth,
height=\figureheight,
mark options={solid},
ymin=0,
ymax=1,
xmin=1,
xmax=64,
ylabel={$\gamma(c)$},
xlabel={$c$},
compat=1.3,
    legend pos= south west,
    legend cell align=left,
legend style={draw=none, fill=none,legend columns=1, font = \footnotesize}
]

\addplot [color=black, mark=square*, mark size=1.2pt,only marks]
plot[]
table[x index = 0, y index =1] from \dataset;
\addlegendentry{$w=3$}

\addplot [color=black, dashed, mark=square,mark size=1.5pt, mark repeat=5]
plot[]
table[x index = 0, y index =3] from \dataset;
\addlegendentry{Bound$_{w=3}$}

\addplot [color=black, mark=*, mark size=1.2pt,only marks]
plot[]
table[x index = 4, y index =5] from \dataset;
\addlegendentry{$w=15$}

\addplot [color=black, solid, forget plot]
plot[]
table[x index = 0, y index =2] from \dataset;

\addplot [color=black, solid,forget plot]
plot[]
table[x index = 4, y index =6] from \dataset;

\addplot [color=black, dashed, mark=o, mark size=1.5pt, mark repeat=5]
plot[]
table[x index = 4, y index =7] from \dataset;
\addlegendentry{Bound$_{w=15}$}

\end{axis}
\end{tikzpicture}}
	\subfloat[][$q=3, k=64, c \in (40, 64)$]{ \pgfplotstableread{probability_increasingRank_q3.dat}{\dataset}
\begin{tikzpicture} 
font = \scriptsize, 
\begin{axis}[scale only axis, 
    width=\figurewidth,
    height=\figureheight,
    mark options={solid},
    ymin=0,
    ymax=1,
	xmin=40,
	xmax=63,
    yticklabels={},
    xlabel={$c$},
    compat=1.3,
    legend pos= south east,
	legend style={draw=none, fill=none,legend columns=1, font = \footnotesize}
    ]

\addplot [color=black, mark=square*, only marks,  mark size=1.2pt]
plot[]
table[x index = 0, y index =1] from \dataset;

\addplot [color=black, mark=*, only marks,  mark size=1.2pt]
plot[]
table[x index = 4, y index =5] from \dataset;

\addplot [color=black, solid, forget plot]
plot[]
table[x index = 0, y index =2] from \dataset;

\addplot [color=black, dashed, mark=square, mark size=1.5pt, mark repeat=5]
plot[]
table[x index = 0, y index =3] from \dataset;

\addplot [color=black, solid,forget plot]
plot[]
table[x index = 4, y index =6] from \dataset;

\addplot [color=black, dashed, mark=o, mark size=1.5pt, mark repeat=5]
plot[]
table[x index = 4, y index =7] from \dataset;

\end{axis}
\end{tikzpicture}}
	\subfloat[][$q=3, k=128$]{  \pgfplotstableread{probability_increasingRank_q3_128.dat}{\dataset}
\begin{tikzpicture} 
font = \scriptsize, 
\begin{axis}[scale only axis, 
    width=\figurewidth,
    height=\figureheight,
    mark options={solid},
    ymin=0,
    ymax=1,
	xmin=1,
	xmax=128,
    ylabel={},
    yticklabels={}, 
    xlabel={$c$},
    compat=1.3,
    legend pos= south west,
    legend cell align=left,
	legend style={draw=none, fill=none,legend columns=1, font = \footnotesize}
    ]

\addplot [color=black, mark=square*, mark size=1.2pt,only marks]
plot[]
table[x index = 0, y index =1] from \dataset;
\addlegendentry{$w=3$}

\addplot [color=black, dashed, mark=square,mark size=1.5pt, mark repeat=9]
plot[]
table[x index = 0, y index =3] from \dataset;
\addlegendentry{Bound$_{w=3}$}

\addplot [color=black, mark=*, mark size=1.2pt,only marks]
plot[]
table[x index = 4, y index =5] from \dataset;
\addlegendentry{$w=15$}

\addplot [color=black, solid, forget plot]
plot[]
table[x index = 0, y index =2] from \dataset;

\addplot [color=black, solid,forget plot]
plot[]
table[x index = 4, y index =6] from \dataset;

\addplot [color=black, dashed, mark=o, mark size=1.5pt, mark repeat=5]
plot[]
table[x index = 4, y index =7] from \dataset;
\addlegendentry{Bound$_{w=15}$}

\end{axis}
\end{tikzpicture}}
	\subfloat[][$q=3, k=128, c \in (100, 128)$]{ \pgfplotstableread{probability_increasingRank_q3_128.dat}{\dataset}
\begin{tikzpicture} 
font = \scriptsize, 
\begin{axis}[scale only axis, 
    width=\figurewidth,
    height=\figureheight,
    mark options={solid},
    ymin=0,
    ymax=1,
	xmin=100,
	xmax=127,
    ylabel={},
    yticklabels={},
    xlabel={$c$},
    compat=1.3,
    legend pos= south east,
	legend style={draw=none, fill=none,legend columns=1, font = \footnotesize}
    ]

\addplot [color=black, mark=square*, only marks,  mark size=1.2pt]
plot[]
table[x index = 0, y index =1] from \dataset;

\addplot [color=black, mark=*, only marks,  mark size=1.5pt]
plot[]
table[x index = 4, y index =5] from \dataset;

\addplot [color=black, solid, forget plot]
plot[]
table[x index = 0, y index =2] from \dataset;

\addplot [color=black, dashed, mark=square, mark size=1.2pt, mark repeat=9]
plot[]
table[x index = 0, y index =3] from \dataset;

\addplot [color=black, solid,forget plot]
plot[]
table[x index = 4, y index =6] from \dataset;

\addplot [color=black, dashed, mark=o, mark size=1.5pt, mark repeat=9]
plot[]
table[x index = 4, y index =7] from \dataset;

\end{axis}
\end{tikzpicture}}
	\caption{Probability of increasing rank Vs. current rank at the decoder for different field sizes ($q=1,3$) and generation sizes ($k=64,128$). Markers show the values obtained with the proposed model and the line corresponds to the simulation results}
	\label{Fig: ProbabilityIncreasingRank}
\end{figure*}
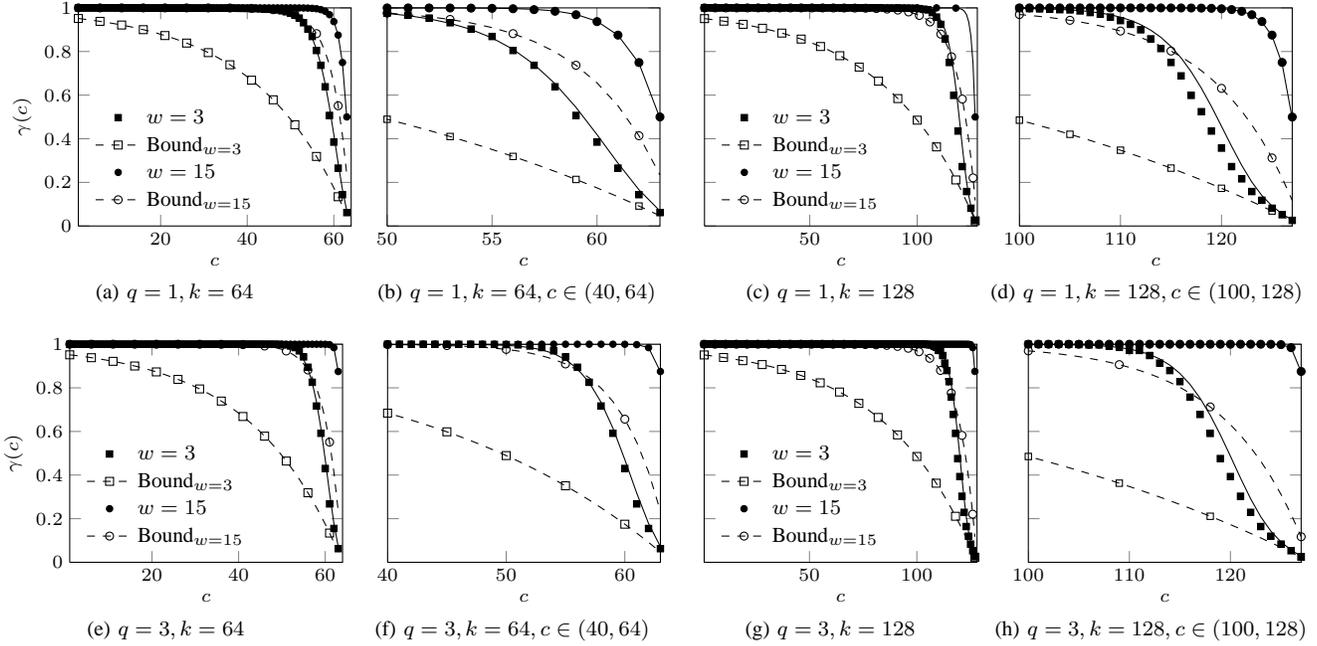

Another interesting result would be the average number of transmissions that would be required to correctly receive a generation, Eq.~\ref{Eq: Transmissions}. Table~\ref{Tab: Transmissions} collects such metric for different configurations, in which we have modified the value of both $w$ and $k$. The theoretical result (\textit{Model}) is obtained using Eq.~(\ref{Eq: Transmissions}). The simulation result (\textit{Simul}) is the average of the 10000 independent runs per configuration. We can first highlight that the difference between the two results is almost negligible (the maximum relative error is less than $0.8 \%$, showing the validity of the proposed model). Another interesting aspect that can be seen is that for larger densities ($\approx 1/2$, i.e $w = k/2$) the average number of additional transmissions is $\approx 1.6$, for all $k$; this result matches the value that would have been obtained for the traditional \emph{RLNC} approach~\cite{Ho2003}. However, for lower densities, i.e. $w=3$, the number of required transmissions increases considerably, especially for larger $k$.

\begin{table}[t]
\begin{center}
\caption{Average number of transmissions required to successfully decode a generation for different configurations}
\label{Tab: Transmissions}
\begin{tabular}{| c | c | c | c | c | c |}
\hline
\multicolumn{2}{|c|}{}       & $w=3$ 	& $w=7$  & $w=15$ & $w=31$ \\ \hline
\multirow{2}{*}{$k=32$}      & Model  & 43.83 	& 33.62  & 33.58  & -  \\ 
							 & Simul  & 44.17 	& 33.64  & 33.60  & - \\ \hline
\multirow{2}{*}{$k=64$}      & Model  & 100.34    & 65.92  & 65.62  & 65.62   \\
					         & Simul  & 101.49    & 65.91  & 65.61  & 65.60   \\ \hline
\multirow{2}{*}{$k=128$}     & Model & 230.36	& 131.22 & 129.85 & 129.85	\\
					         & Simul & 231.89	& 131.19 & 129.60 & 129.60	\\ \hline
\end{tabular}
\end{center}
\end{table}

In order to complement the average values collected in Table~\ref{Tab: Transmissions}, Figure~\ref{Fig: probabilityofDecoding} shows the probability that the successful decoding of the whole generation happened after receiving $\epsilon$ additional packets (i.e $k + \epsilon$). Again, the difference between the results obtained by means of simulation and those using the model is negligible, the mean squared error in the worst case ($w=3$ and $k=128$) is rather low, $=9.35\cdot 10^{-5}$. Note that lower densities would yield a worse performance, since the probability of successfully decoding a complete generation is very small when the number of additional transmissions is low. However, when $w$ increases, such probability equals almost 1 for just 5 additional transmissions. 

\begin{figure}[t]
	\centering
	\def \figurewidth {0.3\columnwidth}
	\def \figureheight {4.0cm}
	\subfloat[][$k=64$]{\pgfplotstableread{prob_decoding_N_k_64_w_3.dat}{\dataset}
\pgfplotstableread{prob_decoding_N_k_64_w_15.dat}{\datasetDos}
\begin{tikzpicture} 
font = \scriptsize, 
\begin{axis}[scale only axis, 
    width=\figurewidth,
    height=\figureheight,
    mark options={solid},
    ymin=0,
    ymax=1,
	xmin=1,
	xmax=100,
    ylabel={prob of decoding},
    xlabel={$\epsilon$},
    compat=1.3,
    legend pos= south east,
            legend cell align=left,
	legend style={draw=none, fill=none,legend columns=1, font = \footnotesize}
    ]

\addplot [color=black, thick, mark=o, only marks, mark repeat=8]
table[x index = 0, y index =1] from \dataset;
\addlegendentry{$w=3$}
\addplot [color=black, thick, mark=triangle, only marks, mark repeat=5]
table[x index = 0, y index =1] from \datasetDos;
\addlegendentry{$w=15$}

\addplot [color=black]
table[x index = 0, y index =2] from \dataset;

\addplot [color=black]
table[x index = 0, y index =2] from \datasetDos;

\end{axis}
\end{tikzpicture}}
	\subfloat[][$k=128$]{\pgfplotstableread{prueba3.dat}{\dataset}
\pgfplotstableread{prueba15.dat}{\datasetDos}
\begin{tikzpicture} 
font = \scriptsize, 
\begin{axis}[scale only axis, 
width=\figurewidth,
height=\figureheight,
mark options={solid},
ymin=0,
ymax=1,
xmin=1,
xmax=225,
ylabel={prob of decoding},
xlabel={$\epsilon$},
compat=1.3,
legend pos= south east,
legend cell align=left,
legend style={draw=none, fill=none,legend columns=1, font = \footnotesize}
]

\addplot [color=black, thick, mark=o, only marks, mark repeat=8]
table[x index = 0, y index =1] from \dataset;
\addlegendentry{$w=3$}
\addplot [color=black, thick, mark=triangle, only marks, mark repeat=5]
table[x index = 0, y index =1] from \datasetDos;
\addlegendentry{$w=15$}

\addplot [color=black]
table[x index = 0, y index =2] from \dataset;

\addplot [color=black]
table[x index = 0, y index =2] from \datasetDos;

\end{axis}
\end{tikzpicture}}

	\caption{Probability of a successful decoding event after $\epsilon$ extra transmissions. Markers show the values obtained with the proposed model and the line corresponds to the simulation results}
	\label{Fig: probabilityofDecoding}
\end{figure}
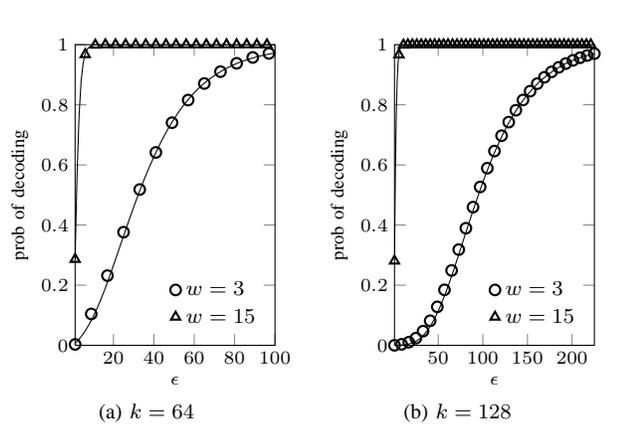

\begin{figure}[t]
	\centering
	\def \figurewidth {0.7\columnwidth}
	\def \figureheight {3.0cm}
	\subfloat[][Number of operations]{\label{Fig: complexity}\pgfplotstableread{complexity_operations_comparative.dat}{\dataset}

\begin{tikzpicture} 
font = \scriptsize, 
\begin{axis}[scale only axis, 
    width=\figurewidth,
    height=\figureheight,
    mark options={solid},
    ymin=0,
	xmin=3,
	xmax=64,
    ylabel={\# operations},
    xlabel={$r$},
    compat=1.3,
    legend pos= north west,
        legend cell align=left,
	legend style={draw=none, fill=none,legend columns=1, font = \footnotesize}
    ]

\addplot [color=black, mark=triangle*, thick, mark repeat=10, mark phase = 2]
plot[]
table[x index = 0, y expr=\thisrowno{1}*0.001] from \dataset;
\addlegendentry{\emph{TSNC-Model}}


\addplot [color=black, mark=*,mark size=1.5pt, thick, mark repeat=10, mark phase = 7]
plot[]
table[x index = 0, y expr=\thisrowno{4}*0.001] from \dataset;
\addlegendentry{\emph{TSNC-Bound}}

\addplot[color=black, dashed]
plot[]
table[x index = 0, y expr=\thisrowno{7}*0.001] from \dataset;
\addlegendentry{$w=3$}

\addplot [color=black, mark=square,mark size=1.7pt,thick, mark repeat=10]
plot[]
table[x index = 0, y expr=\thisrowno{10}*0.001] from \dataset;
\addlegendentry{\emph{RLNC}}

\end{axis}
\end{tikzpicture}} \\
	\subfloat[][Number of transmissions]{\label{Fig: transmissions}\pgfplotstableread{complexity_operations_comparative.dat}{\dataset}

\begin{tikzpicture} 
font = \scriptsize, 
\begin{axis}[scale only axis, 
    width=\figurewidth,
    height=\figureheight,
    mark options={solid},
    ymin=0,
	xmin=3,
	xmax=64,
    ylabel={\# transmissions},
    xlabel={$r$},
    compat=1.3,
    legend pos= north west,
        legend cell align=left,
	legend style={draw=none, fill=none,legend columns=1, font = \footnotesize}
    ]

\addplot [color=black, mark=triangle*, thick, mark repeat=10 ]
plot[]
table[x index = 0, y index = 2] from \dataset;
\addlegendentry{\emph{TSNC-Model}}


\addplot [color=black, mark=*, mark size=1.5pt, thick, mark repeat=10, mark phase = 7]
plot[]
table[x index = 0, y index = 5] from \dataset;
\addlegendentry{\emph{TSNC-Bound}}

\addplot[color=black, dashed]
plot[]
table[x index = 0, y index = 8] from \dataset;
\addlegendentry{$w=3$}

\addplot [color=black, mark=square,mark size=1.7pt,thick, mark repeat=10, mark phase= 0]
plot[]
table[x index = 0, y index = 11] from \dataset;
\addlegendentry{\emph{RLNC}}

\end{axis}
\end{tikzpicture}}
	\caption{Number of operations by the decoder and transmissions by the encoder when the decoder has already received $r$ linearly independent packets}
	\label{Fig: complexityAndTransmissions}
\end{figure}
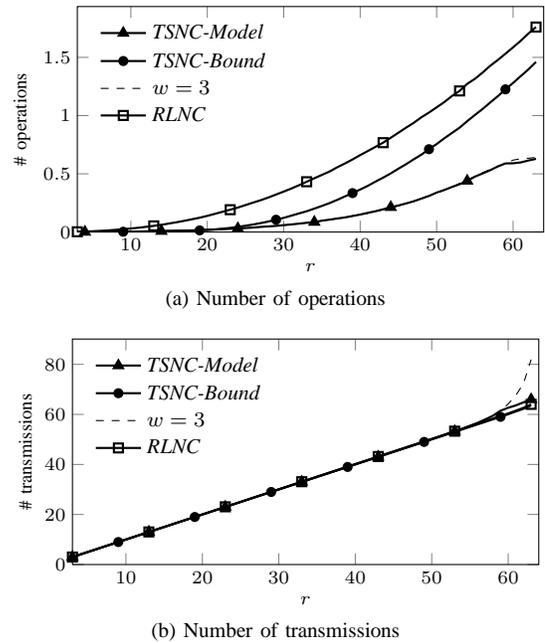

Up to now, we have argued that the use of low coding densities would be beneficial since they would yield shorter decoding times. In order to assess this, Figure~\ref{Fig: complexity} shows the number of operations that have been executed at the decoder from the first reception, until the packet is successfully decoded. On the other hand, Figure~\ref{Fig: transmissions} shows the average number of coded packets sent by the encoder.  We have used the KODO library, which logs the number of operations that are executed at the receiver. The complexity of an individual operation is alike for any $w$ value, so the results shown in the figure provide a rather precise idea of the complexity of the decoding process. We plot the number of accumulated operations and number of total transmissions versus the current rank at the receiver. The figures show the average value after 10000 independent runs. When a \emph{TSNC} approach is used, the encoder has perfect knowledge of the decoder state and changes the density when the expected number of transmissions to increase the rank by $1$ at the decoder is higher than $1.1$. By using the model proposed we have a higher reduction on complexity. In particular, the complexity achieved is almost the same using a fixed \ac{SNC} approach and $w=3$ but, note that in \ac{SNC} case the number of transmissions at the end is considerably high. We can clearly see that a dynamic tuning of the coding density would certainly yield relevant improvements. We could keep the density at a low value until such probability starts to decrease and then shift to a higher $w$.


After validating the proposed model over an ideal channel, we broaden the assessment to include error-prone scenarios. For that we introduce a certain loss probability, $\xi$, over the link between the source and the receiver, so that some packets are lost during their transmission. Figure~\ref{Fig: AverageErrors} shows how the average number of transmissions that are required to successfully decode a generation, with $k=64$, as a function of $\xi$. As can be seen there is again a very tight match between the simulation results and the values obtained with the proposed model, the relative error is less than $0.8 \%$. The impact of greater $\xi$ values is more visible for low densities, since the number of required transmissions would increase quite strongly. 

\begin{figure}[t]
	\centering
	\def \figurewidth {0.7\columnwidth}
	\def \figureheight {3.0cm}
	\pgfplotstableread{averageTx_FER.dat}{\dataset}
\begin{tikzpicture} 
font = \scriptsize, 
\begin{axis}[scale only axis, 
    width=\figurewidth,
    height=\figureheight,
    mark options={solid},
	xmin=0,
	xmax=0.8,
    ylabel={\# of packets transmitted},
    xlabel={$\xi$},
    compat=1.3,
    legend pos= north west,
	legend cell align=left,
	legend style={draw=none, fill=none,legend columns=1, font = \footnotesize}
    ]

\addplot [color=black,forget plot]
table[x index = 0, y index =3] from \dataset;

\addplot [color=black,forget plot]
table[x index = 0, y index =4] from \dataset;

\addplot [color=black, thick, mark = square, only marks]
table[x index = 0, y index =1] from \dataset;
\addlegendentry{$w=3$}

\addplot [mark=o, , thick, only marks]
table[x index = 0, y index =2] from \dataset;
\addlegendentry{$w=31$}

\end{axis}
\end{tikzpicture}
	\caption{Average number of transmissions required to decode a generation Vs. Packet Loss Rate ($\xi$). Markers show the values obtained with the proposed model and the line corresponds to the simulation results}
	\label{Fig: AverageErrors}
\end{figure}
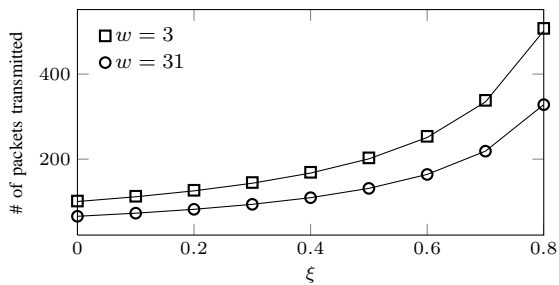

The previous result are broaden in Figure~\ref{Fig: probabilityofDecodingFER}, which shows the impact of packet erasure channels over the probability of being able to decode a generation after $N = k + \epsilon$ transmissions. We have used a low density, $w=3$, since it yields the worse behavior in terms of the additional transmissions that are required. The impact of the hostile links is again clearly seen in the figures. On the other hand, we can see that in this case there is a slight difference between the simulations results and the values obtained with the model. As was already discussed, the fitting for $w = 3$ was less accurate than the ones for larger densities; in any case, the differences are rather small, and the results clearly show that the proposed model can be also be used over error-prone links. In this case the mean squared error when $\xi=0.3$ is $3.90 \cdot 10^{-4}$.

\begin{figure}[t]
	\centering
	\def \figurewidth {0.7\columnwidth}
	\def \figureheight {3.0cm}
	\pgfplotstableread{prob_decoding_N_FER.dat}{\dataset}
\begin{tikzpicture} 
font = \scriptsize, 
\begin{axis}[scale only axis, 
    width=\figurewidth,
    height=\figureheight,
    mark options={solid},
    ymin=0,
    ymax=1,
	xmin=1,
	xmax=100,
    ylabel={prob of decoding},
    xlabel={$\epsilon$},
    compat=1.3,
    legend pos= north west,
	legend cell align=left,
        mark repeat={8},
	legend style={draw=none, fill=none,legend columns=1, font = \footnotesize}
    ]

\addplot [color=black,forget plot]
table[x index = 0, y index =4] from \dataset;

\addplot [color=black,forget plot]
table[x index = 0, y index =5] from \dataset;

\addplot [color=black,forget plot]
table[x index = 0, y index =6] from \dataset;
	  
\addplot [color=black, thick,mark=square, only marks]
table[x index = 0, y index =1] from \dataset;
\addlegendentry{$\xi=0.0$}

\addplot [color=black, thick, mark=o, only marks]
table[x index = 0, y index =2] from \dataset;
\addlegendentry{$\xi=0.1$}

\addplot [color=black, thick, mark= triangle, only marks]
table[x index = 0, y index =3] from \dataset;
\addlegendentry{$\xi=0.3$}

\end{axis}
\end{tikzpicture}
	\caption{Probability of a successful decoding event after $\epsilon$ extra transmissions, $k=64$ and $q=1$. Markers show the values obtained with the proposed model and the line corresponds to the simulation results}
	\label{Fig: probabilityofDecodingFER}
\end{figure}
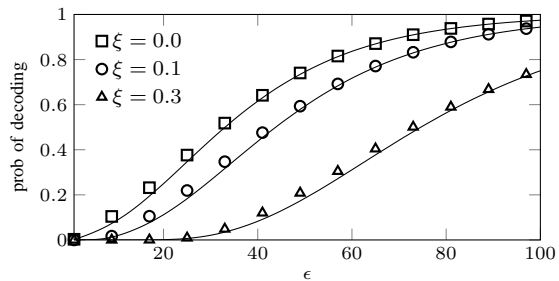

\section{Conclusions and future work}
\label{Sec: Conclusions}

In this paper we have presented the first complete model to mimic the behavior of sparse coding techniques. It is based on an Absorbing Markov Chain, where the states are defined as the combination of the rank and the non-zero columns of the corresponding decoding matrix. After finding the corresponding transition probabilities, we have exploited the properties of the Absorbing Markov Chains to derive some some key performance figures: average number of transmissions, probability of successfully decoding a generation after $N$ transmissions, and probability of receiving a linearly independent packet. We have also seen that the model can be broadened to consider error-prone links. Afterwards, the model has been validated by means an extensive simulation, yielding an almost perfect match and clearly outperforming some of the bounds that have been used in previous works.

This model can be easily exploited to select the most appropriate density in sparse coding techniques. In particular, in a \ac{TSNC} solution, the encoder decides the optimum density value as the transmission evolves, for instance taking into account the trade-off between useless transmissions (linear dependent combinations) and the reduction on computational complexity. Hence, the proposed model could provide a better insight to establish optimum configurations of \ac{TSNC} solutions.

%
%

\begin{acronym}

  \acro{NC}{Network Coding}
  \acro{RLNC}{Random Linear Network Coding}
  \acro{RTT}{Round Trip Time}
  \acro{RTO}{Retransmission TimeOut}
  \acro{WMN}{Wireless Mesh Network}
  \acro{WMNs}{Wireless Mesh Networks}
  \acro{MSS}{Maximum Segment Size}
  \acro{MAC}{Medium Access Control}
  \acro{SDU}{Service Data Unit}
  \acro{PDU}{Protocol Data Unit}
  \acro{DIFS}{Distributed Inter-Frame Space}
  \acro{MTU}{Maximum Transfer Unit}
  \acro{ISP}{Internet Service Provider}
  \acro{RLC}{Random Linear Coding}
  \acro{HMP}{Hidden Markov Process}

  \acro{FER}{Frame Error Rate}
  \acro{PER}{Packet Error Rate}

  \acro{PLCP}{Physical Layer Convergence Protocol}

  \acro{ARQ}{Automatic Repeat Request}
  \acro{FEC}{Forward Error Correction}

  \acro{RLSC}{Random Linear Source Coding}
  \acro{RLNC}{Random Linear Network Coding}
  \acro{MORE}{MAC-independent Opportunistic Routing \& Encoding}
  \acro{CATWOMAN}{Coding Applied To Wireless On Mobile Ad-Hoc Networks}
  \acro{CTCP}{Network Coded TCP}
  
  \acro{cdf}{cumulative distribution function}
  
  \acro{BATS}{BATched Sparse}
  \acro{TSNC}{Tunable Sparse Network Coding}
  \acro{SNC}{Sparse Network Coding}

\end{acronym}


\section*{Acknowledgements}

This work has been supported by the Spanish Government (Ministerio de Econom\'ia y Competitividad, Fondo Europeo de Desarrollo Regional, FEDER) by means of the projects \textbf{COSAIF}, \emph{``Connectivity as a Service: Access for the Internet of the Future''} (TEC2012-38754-C02-01), and \textbf{ADVICE} (TEC2015-71329-C2-1-R).  This work was also financed in part by the \textbf{TuneSCode} project (No. DFF 1335-00125) granted by the Danish Council for Independent Research.

\bibliographystyle{IEEEtran}
\bibliography{biblio}

\end{document}